\documentclass[onefignum,onetabnum]{siamart171218}
\usepackage{amssymb}
\usepackage[sort,compress]{cite}
\usepackage[version=4]{mhchem}  

\usepackage{mathtools}
\usepackage{array}
\usepackage{tabularx}
\usepackage{wrapfig}

\newcolumntype{C}{>{\centering\arraybackslash}X}


\usepackage{todonotes}
\usepackage{lipsum}
\usepackage{amsfonts}
\usepackage{graphicx}
\usepackage{epstopdf}
\usepackage{algorithm}
\usepackage{algpseudocode}
\ifpdf
  \DeclareGraphicsExtensions{.eps,.pdf,.png,.jpg}
\else
  \DeclareGraphicsExtensions{.eps}
\fi


\newsiamremark{remark}{Remark}
\newsiamremark{hypothesis}{Hypothesis}
\crefname{hypothesis}{Hypothesis}{Hypotheses}
\newsiamthm{claim}{Claim}

\headers{Functional Neural Wavefunction Optimization}{Armegioiu, Carrasquilla, Mishra, Müller, Nys, Zeinhofer, Zhang}

\title{Functional Neural Wavefunction Optimization
}

\author{Victor Armegioiu\thanks{ETH Zürich, D-MATH  
  (\email{victor.armegioiu@sam.math.ethz.ch}).}
\and Juan Carrasquilla\thanks{ETH Zürich, D-PHYS
  (\email{jcarrasquill@ethz.ch}).}
\and Siddhartha Mishra\thanks{ETH Zürich, D-MATH
  (\email{siddhartha.mishra@sam.math.ethz.ch}).}
\and Johannes Müller\thanks{TU Berlin
    (\email{johannes.christoph.mueller@tu-berlin.de}).}
\and Jannes Nys\thanks{ETH Zürich, D-PHYS
    (\email{jannys@ethz.ch}).}
\and Marius Zeinhofer\thanks{ETH Zürich, D-MATH
    (\email{marius.zeinhofer@sam.math.ethz.ch}).}
\and Hang Zhang\thanks{ETH Zürich, D-PHYS
    (\email{hangzhang@ethz.ch}).}
}

\usepackage{amsopn}

\makeatletter
\newcommand*{\addFileDependency}[1]{
  \typeout{(#1)}
  \@addtofilelist{#1}
  \IfFileExists{#1}{}{\typeout{No file #1.}}
}
\makeatother

\newcommand*{\myexternaldocument}[1]{%
    \externaldocument{#1}%
    \addFileDependency{#1.tex}%
    \addFileDependency{#1.aux}%
}

\ifpdf
\hypersetup{
  pdftitle={Functional Neural Wavefunction Optimization},
  pdfauthor={Victor Armegioiu, Juan Carrasquilla, Siddhartha Mishra, Johannes Müller, Jannes Nys, Marius Zeinhofer and Hang Zhang}
}
\fi


\myexternaldocument{ex_supplement}

\begin{document}

\maketitle

\begin{abstract}
We propose a framework for the design and analysis of optimization algorithms in variational quantum Monte Carlo, drawing on geometric insights into the corresponding function space. 
The framework translates infinite-dimensional optimization dynamics into tractable parameter-space algorithms through a Galerkin projection onto the tangent space of the variational ansatz. 
This perspective unifies existing methods such as stochastic reconfiguration and Rayleigh-Gauss-Newton, provides connections to classic function-space algorithms, and motivates the derivation of novel algorithms with geometrically principled hyperparameter choices.
We validate our framework with numerical experiments demonstrating its practical relevance through the accurate estimation of ground-state energies for several prototypical models in condensed matter physics
modeled with neural network wavefunctions. 
\end{abstract}

\begin{keywords}
    Variational Monte Carlo, Neural Networks, 
    Riemannian Optimization
\end{keywords}

\begin{AMS}
  68Q25, 68R10, 68U05
\end{AMS}

\section{Introduction}
The variational Monte Carlo (VMC) method is a flexible computational method for solving quantum many-body problems \emph{ab initio}. 
It tackles the stationary Schrödinger equation $\hat H\psi=E\psi$ directly via energy minimization, meaning the minimization of the Rayleigh quotient
\begin{equation}\label{eq:rayleigh_unconstrained}
    \min_{\psi\neq 0} E(\psi) 
    =
    \frac{\langle \psi,\hat H\psi\rangle}{\langle \psi, \psi\rangle}.
\end{equation}
Together with the Rayleigh-Ritz variational principle, which guarantees $E(\psi) \geq E_0$, this yields the necessary ingredients to approximate the ground state $\psi_0$ and the associated energy $E_0$
. 
VMC relies on stochastic integration via Monte Carlo sampling, allowing for highly expressive wavefunctions. Popular choices include hand-crafted Slater-Jastrow wavefunctions, as well as more generic neural network ansätze. Carleo and Troyer pioneered the latter approach in~\cite{carleo2017solving}, parameterizing the ground-state wavefunction of quantum spin systems with up to 100 spins.

Recent work has highlighted the capability of deep neural networks for approximating complex many‑body systems~\cite{wu2024variational, astrakhantsev2021broken, PhysRevResearch.2.023358,mossLeveragingRecurrenceNeural2025, pescia2024message, viteritti2023transformer, chen2024empowering, denis2025accurate, linteau2025phase}. 
For electronic systems, neural network architectures have been designed that conform to the Fermi-Dirac statistic. These neural networks guarantee particle-permutation symmetry of the fermionic wavefunction, with examples being FermiNet~\cite{pfau2020ab} and PauliNet~\cite{hermann2020deep}, and more recent architectures~\cite{pescia2024message, foster2025ab, von2022self, scherbela2024towards, li2024computational, robledo2022fermionic, kim2024neural, lou2024neural, pfau2024accurate}. These have achieved promising accuracy on small isolated molecules through VMC training.

Although VMC is flexible, exhibits favorable polynomial scaling with system size~\cite{hermann2023ab}, and, as an ab initio method, provides full access to the wavefunction, its advantages are offset by a challenging optimization landscape: high-dimensional, nonconvex, and stochastic. A popular optimization approach within the VMC community is stochastic reconfiguration~\cite{sorella1998green} which preconditions gradient updates by the inverse of the overlap matrix and significantly outperforms common deep learning optimizers such as Adam. The increased interest in deep neural network representations of quantum states has induced a revival of interest in alternative optimization algorithms~\cite{chen2024empowering, rende2024simple, neklyudov2023wasserstein}, ideally with faster convergence than stochastic reconfiguration~\cite{webber2022rayleigh, goldshlager2024kaczmarz, drissi2024second, peng2025analysis}.

In this work, we present a unified framework that integrates many of the diverse existing optimization algorithms for VMC. 
Following the recently proposed \emph{optimize-then-discretize} paradigm \cite{mullerposition, muller2023achieving}, we adopt an infinite-dimensional viewpoint to identify promising optimization algorithms for VMC. 
We take this approach beyond the instances known in the literature with stochastic reconfiguration arising from $L^2$ gradient descent and the recently proposed Wasserstein quantum Monte Carlo method~\cite{neklyudov2023wasserstein} arising from Wasserstein gradient descent. 
We base our functional setting on a Riemannian reformulation of the energy \eqref{eq:rayleigh_unconstrained} by restricting the admissible wavefunctions to the sphere $\mathbb{S} = \{ \psi  : \langle \psi, \psi \rangle = 1 \}$, resulting in a minimization problem over $\mathbb S$ given by 
\begin{equation}\label{eq:rayleigh_sphere}
    \min_{\psi\in\mathbb S}E(\psi)
    =
    \langle \psi, \hat H \psi \rangle.
\end{equation}
This perspective naturally motivates Riemannian optimization algorithms. 
We stress that this is more than mere nomenclature as a comparison of the non-Riemannian and Riemannian Newton methods reveals, see Remark~\ref{remark:newton_for_rq_ill_suited}.  
To minimize, we consider algorithms that start at $\psi_0\in \mathbb S$ and result in an iterative procedure of the form
\begin{equation}\label{eq:general-FS-update}
    \psi_{k+1} 
    =
    \frac{\psi_k - \eta_k d_{k}}{\lVert \psi_k - \eta_k d_{k} \rVert},
\end{equation}
adding directions $d_k$ scaled by a step-size $\eta_k$ and retracting back to the sphere. 
We stress that explicit normalization is typically not needed, ensuring flexibility in the choice of ansatz classes.
Note that some algorithms are naturally derived from a flow, in which cases the above scheme corresponds to an explicit time discretization. 
To turn the above into a computational scheme in parameter space, we project the directions $d_k$ onto the tangent space of the neural network ansatz $\psi_\theta$. 
We rely on \emph{Galerkin projections}, described in detail in \Cref{sec:Galerkin_discretization}. 
This leads to preconditioned gradient schemes of the form
\begin{equation}\label{eq:Galerkin_projected_algo}
    \theta_{k+1} = \theta_k - \eta_k Q(\theta_k)^{-1} \nabla_\theta E(\psi_{\theta_k}),
\end{equation}
where the choice of the matrix $Q(\theta_k)$ relates to the specific optimization algorithm chosen on the infinite-dimensional level. As concrete examples of functional algorithms, we discuss Riemannian gradient descent and Riemannian Newton methods. We show that Riemannian gradient descent, transferred to parameter space via Galerkin projections, leads to the well-known stochastic reconfiguration (SR) method. Our viewpoint allows a detailed convergence analysis of SR on the functional level, characterizing deteriorating convergence speed for closing spectral gaps, and a derivation of optimal step-sizes, for details we refer to \Cref{sec:SR}. 

As a second-order method, we consider a Riemannian Newton method, and show its correspondence to the classic Rayleigh quotient iteration on the functional level. We derive its Galerkin projected version of the form \eqref{eq:Galerkin_projected_algo} in \Cref{sec:Newton}. To globalize the local convergence of the Riemannian Newton method, we shift the Riemannian Hessian by a suitably scaled identity, possibly incorporating a priori information on the ground-state energy. We show that this corresponds to classic (shifted) inverse iteration on the functional level and we derive its Galerkin-projected version in \Cref{sec:inverse_iter}. We refer to it as \emph{projected inverse iteration (PII)} and identify it as a particularly well-suited choice for energy minimization. The decisive property of PII is its robustness to closing spectral gaps between the ground state and first excited state. This can be understood through the analysis of inverse iteration on the infinite-dimensional level which reveals that convergence deterioration in the presence of narrow spectral gaps can be mitigated by a suitable choice of shift parameter. 

\begin{table}
    \label{tab:algorithms}
    \centering
    \renewcommand{\arraystretch}{1.6}  
    \begin{tabularx}{\textwidth}{|C|C|C|}
    \hline
    {\raisebox{-0.8\height}{\textbf{Parameter algorithm}}} & \textbf{Function space optimizer} & {\raisebox{-0.8\height}{\textbf{Eigensolver}}} \\
    \hline\hline
    Projected inverse iteration & Globalized Riemannian Newton method on $\mathbb{S}$ & {\raisebox{-.8\height}{Shifted inverse iteration}} \\\hline
    Rayleigh Gauss-Newton \cite{webber2022rayleigh} & Riemannian Newton method on $\mathbb{S}$ & Rayleigh quotient iteration \\\hline
    Stochastic reconfiguration & Riemannian $L^2$ gradient descent on $\mathbb{S}$ & {\raisebox{-.8\height}{Power iteration}} \\\hline
    Wasserstein quantum Monte Carlo \cite{neklyudov2023wasserstein} & Wasserstein gradient descent & {\raisebox{-\height}{--}} \\\hline
    \end{tabularx}
    \caption{Correspondence of optimization algorithms across parameter space, function space, and---where applicable---eigensolvers. }
\end{table}

Numerically, we compare stochastic reconfiguration to the derived projected inverse iteration for several quantum spin systems on one- and two-dimensional lattices, see \Cref{sec:computational-examples}.
The numerical results confirm our theoretical findings, showing improved convergence properties of projected inverse iteration, especially in situations of closing spectral gaps. 
To conclude we summarize our main contributions:

\begin{itemize}
    \item \textbf{We propose a framework for translating infinite-dimensional optimization algorithms to tractable schemes in parameter space} by projecting functional updates onto the tangent space of the parametric ansatz. The Galerkin projections ensure that key properties of the functional algorithm are preserved in the discretized form. An overview is given in Table~\ref{tab:algorithms}.
    \item \textbf{We exercise the framework for Riemannian gradient and Riemannian Newton methods.} We show that  $L^2$ gradient descent on the sphere corresponds to stochastic reconfiguration within the framework, explaining its linear convergence and sensitivity to closing spectral gaps. We analyze (globalized) Riemannian Newton schemes, relate them to Rayleigh quotient and inverse iteration on the functional level, and derive their projected versions in parameter space.
    \item \textbf{We propose projected inverse iteration (PII),}
    a novel algorithm particularly well-suited for energy minimization. We derive it from a globalized Riemannian Newton method, corresponding to inverse iteration on the functional level. It can incorporate a priori information on the ground state energy, is robust to narrow spectral gaps, and can converge much faster than SR, as we demonstrate on examples on several quantum spin systems. 
\end{itemize}

\section{Notation and preliminaries}\label{sec:preliminaries}
We restrict our discussion to real wavefunctions. By $\mathcal H$ we denote the Hilbert space of quantum states, and for $v,w\in\mathcal H$ we denote their inner product by $\langle v, w\rangle$ and the corresponding norm by $\| v\|^2 = \langle v, v\rangle$.
The Hilbert space can be either finite or infinite-dimensional, with  $L^2(\mathbb R^{3N})$
and $\mathbb R^{2^N}$ being typical examples, where $N\in\mathbb N$ denotes the number of particles. 
The topological dual of a Hilbert space $\mathcal H$ is denoted by $\mathcal H^*$ and consists of the continuous linear functionals defined on $\mathcal H$ with values in $\mathbb R$. 
The Riesz representation theorem guarantees that every functional $f\in\mathcal H^*$ can be uniquely represented by a vector $v\in \mathcal H$ via $\langle \cdot,v\rangle = f$. 
The Hamiltonian $\hat H$ is a densely defined linear self-adjoint operator $\hat H\colon\operatorname{dom}(\hat H)\subset \mathcal H \to \mathcal H$. 
We denote its eigenvalues (in the point spectrum) in ascending order by $E_0, E_1, \ldots\in\mathbb R$ with corresponding eigenfunctions $\psi_0, \psi_1, \ldots \in\operatorname{dom}(\hat H)$. 
We assume that the spectrum of $\hat H$ is lower bounded; further assumptions on the Hamiltonians will be stated when relevant.

As we consider the problem \eqref{eq:rayleigh_sphere} of minimizing the Rayleigh quotient over the sphere, we work with the tangent space of $\mathbb{S}$ at a point $\psi\in \mathbb{S}$, which is given by
\begin{equation}
    T_\psi \mathbb{S} = \{ v \in \mathcal H : \langle \psi, v \rangle = 0 \}.
\end{equation}
The first two derivatives of the energy at $\psi\in \operatorname{dom}(\hat H)\setminus\{0\}$  are given by
\begin{equation*}
    \begin{split}
    DE(\psi)(v)
    &=
    \frac{2}{\|\psi\|^2}
    \left[
        \langle \hat H \psi, v \rangle
        - E(\psi)
        \langle  \psi, v  \rangle
    \right]
    \\
    D^2E(\psi)(v,w) 
    &= 
    \frac{2}{\|\psi\|^2}
    \left[
        \langle \hat H v, w \rangle
        - E(\psi) 
        \langle  v, w\rangle
        -
        \langle  \psi, w\rangle
        DE(\psi)(v) - 
        \langle  \psi, v\rangle
        DE(\psi)(w)
    \right]
    \end{split}
\end{equation*}
The corresponding differential and Hessian on the sphere act only on the tangent space $T_\psi\mathbb S$. 
For $\psi\in\mathbb S$ and tangent vectors $v, w\in T_\psi\mathbb S$ they are given by
\begin{align}
        DE(\psi)(v) 
        &= 2 
        \langle \hat H \psi, v \rangle
        , \label{eq:first-derivative-RQ} \\
        D^2E(\psi)(v, w) 
        &= 2 \left[ 
        \langle \hat H v, w \rangle
        - E(\psi)
        \langle  v, w\rangle
        \right], \label{eq:second-derivative-RQ}
\end{align}
see \cite[page 96, Example 5.17]{boumal2023introduction} for a derivation. From here on, $DE$ and $D^2E$ denote the Riemannian differential and Riemannian Hessian.

\subsection{Galerkin discretization and neural networks}\label{sec:Galerkin_discretization}
As stated above, to solve $\min_{\psi\in\mathbb S}E(\psi)$, we consider algorithms of the form
\begin{equation*}
    \psi_{k+1} = \frac{\psi_k - \eta_k d_{k}}{\lVert \psi_k - \eta_k d_{k} \rVert}.
\end{equation*}
For simplicity, we suppress the subscript $k$ below. In all functional algorithms that we consider, the direction $d\in T_{\psi} \mathbb S$ is given by the solution of a linear system posed in the tangent space 
\begin{equation}\label{eq:Galerkin_discretizations}
    b(d, v) = f(v), \quad\text{for all }v\in T_{\psi}\mathbb S.
\end{equation}
Here, $b$ is a bilinear form on $T_{\psi}\mathbb S$, $f$ is a member of the cotangent space $T_{\psi}^*\mathbb S$ and the solution $d$ is sought in $T_{\psi}\mathbb S$. Typical choices for $b$ are the $L^2$ inner product or the Riemannian Hessian $D^2E(\psi)$, and for $f$ the Riemannian differential $DE(\psi)$.

To approximate the solution $d$, a Galerkin discretization designates a finite-dimensional subspace $\mathcal V$, say spanned by the vectors $v_1,\dots v_P\in T_\psi\mathbb S$, where an approximate solution $d_\mathcal V$ to \eqref{eq:Galerkin_discretizations} is sought as a linear combination of $v_1,\dots v_P$ via
\begin{equation}\label{eq:galerkin-function-space}
    b(d_\mathcal V,v) = f(v)
\quad \text{for all } v\in \mathcal V.
\end{equation}
This leads to a linear system $B\mathrm \alpha = \mathrm f$ with $B_{ij}=b(v_j, v_i)$ and $\mathrm f_i = f(v_i)$. 
If it exists, the Galerkin solution is given by $d_\mathcal V = \sum_{i=1}^P \alpha_i v_i$. 
If $b$ is an inner product, the Riesz representation theorem guarantees the existence of both $d$ and $d_\mathcal V$ and implies that $d_\mathcal V$ is the $b$-orthogonal projection of $d$ onto span of $\{v_1,\dots, v_P\}$. In formulas this means
\begin{equation}\label{eq:lem:cea}
    \|d - d_\mathcal V\|_b = 
\inf_{v\in \mathcal V} \|d - v\|_b, 
\end{equation}
where $\|v\|_b = b(v,v)^\frac12$. 

The discussion so far was purely functional and did not require a parametric ansatz, and we now introduce a neural network ansatz $\psi_\theta$ with trainable parameters $\theta\in\Theta=\mathbb R^P$. 
We denote the normalized model by $\hat{\psi}_\theta = \psi_\theta / \lVert \psi_\theta \rVert$, which is often infeasible to compute and hence neural network-based ansätze are typically not \emph{explicitly} normalized. 
Rather, one makes use of the normalization implicitly by expressing the quantities of interest as expectations with respect to the Born density $p_\theta\propto \psi_\theta^2$ as we discuss in~\Cref{subsec:preliminaries-algorithmic}.

The objective or loss function used to optimize the network's parameters is the energy evaluated at a parametric wavefunction and given by 
\begin{equation}
    L(\theta) 
    =
    E(\hat \psi_\theta)
    =
    \frac{\langle  \psi_\theta, \hat H\psi_\theta \rangle}{\langle \psi_\theta, \psi_\theta\rangle}.
\end{equation}
We construct a Galerkin subspace of $T_{\psi_\theta} \mathbb S$ via 
\begin{equation}\label{eq:galerkin-space}
    \mathcal V_\theta
    = 
    \left\{ 
        \partial_{\theta_i}\hat{\psi}_\theta : i =1, \dots, P 
    \right\}
    \subset
    T_{\psi_\theta}\mathbb S,
\end{equation}
which corresponds to a generalized tangent space onto the model $\{\hat{\psi}_\theta:\theta\in\Theta\}$ parameterized by the network. 
We explicitly write out the Galerkin discretizations of the Riemannian differential $DE$ as well as the bilinear forms $\langle \cdot, \cdot \rangle$ and the Hamiltonian $\langle \cdot, \hat H\cdot \rangle$. 
When discretizing the differential, we obtain
\begin{equation}
    DE(\hat\psi_\theta)(\partial_{\theta_i} \hat \psi_\theta) 
    = 
    \partial_{\theta_i} L(\theta) 
    =
    2 \langle  \hat \psi_\theta, \hat H \partial_{\theta_i} \hat \psi_\theta \rangle 
\end{equation}
and hence $DE(\psi)$ corresponds to $\nabla L(\theta)$ upon Galerkin discretization in the space $\mathcal V_\theta$. Further, discretization of the $L^2$ inner product yields a matrix $S(\theta)$ with entries 
\begin{equation}\label{eq:overlap_matrix}
    S(\theta)_{ij} 
    =
    \langle \partial_{\theta_i} \hat\psi_\theta, \partial_{\theta_j} \hat\psi_\theta \rangle,
\end{equation}
which is commonly referred to as the \emph{overlap matrix} or \emph{quantum geometric tensor}~\cite{stokes2020quantum}. 
Finally, discretization of the Hamiltonian yields 
\begin{equation}
    H(\theta)_{ij} 
    =
    \langle  \partial_{\theta_i}\hat\psi_\theta, \hat H \partial_{\theta_j} \hat\psi_\theta \rangle, 
\end{equation}
see also \cite{toulouse2007optimization}. Note that the bilinear form induced by the Hamiltonian $\langle\cdot, \hat H\cdot \rangle$  is indefinite, if the Hamiltonian $\hat H$ is. 
Consequently, the matrix $H(\theta)$ is not necessarily definite.

The Galerkin discretization of all considered algorithms of the form \eqref{eq:general-FS-update}, leads us to preconditioned gradient schemes of the form
\begin{equation}
    \theta_{k+1} = \theta_k - \eta_k Q(\theta_k)^{-1} \nabla L(\theta_k),
\end{equation}
where the matrix $Q(\theta_k)\in\mathbb R^{P\times P}$ depends on the specific choice of algorithm.
Note that we can frequently reduce the cubic complexity $\mathcal O(P^3)$ of the matrix inversion to linear complexity in the dimension $P$ of the parameter space, relying on Woodbury-type matrix identities~\cite{rende2024simple, chen2024empowering}. Details are given in \Cref{subsec:woodbury}.

\subsection{Monte Carlo estimators}\label{subsec:preliminaries-algorithmic}
In practice, all quantities need to be estimated, as flexible wavefunction ansätze rule out exact computation of the inner product via quadrature rules. 
The idea is to formulate quantities as expectations with respect to the Born density $p_\theta \propto \psi_\theta^2$ and to apply Monte Carlo methods. Hence, let $(x_i)_{i=1,\dots,M}$ denote $M$ independent and identically distributed samples from $p_\theta$. In the following we use the abbreviation $\langle f \rangle = \mathbb E_{p_\theta}[f]$. It is convenient to introduce the \emph{local energy}, given by 
\begin{equation}
    E_L
    =
    E_L(\psi_\theta)
    =
    \frac{\hat H \psi_\theta}{\psi_\theta}.
\end{equation}

\paragraph{Estimating $L(\theta)$} The loss function can be written as the expectation of the local energy with respect to the Born density and approximated using the samples $(x_i)_{i=1,\dots,M}$, in formulas
\begin{equation}
    L(\theta)
    =
    \langle E_L(\psi_\theta) \rangle
    \approx
    \frac1M \sum_{i=1}^M E_L(\psi_\theta)(x_i).
\end{equation}
Given successful optimization, the integrand $E_L(\psi_\theta)$ tends to a constant function---the ground-state energy---and thus its variance vanishes. 

\paragraph{Estimating $\nabla L(\theta)$} Using that $\hat H$ is self-adjoint, we can write the gradient of the loss as
\begin{equation*}
    \nabla L(\theta)
    =
    \Big\langle
        2\big( 
            E_L(\psi_\theta) - L(\theta) 
        \big)
        \big(
            \nabla_\theta \log|\psi_\theta|
            -
            \langle\nabla_\theta \log|\psi_\theta|\rangle
        \big)
    \Big\rangle
    \approx
    O^Tr
\end{equation*}
where $r\in\mathbb R^M$ and $O\in\mathbb R^{M\times P}$ are given by
\begin{align}
    r_i \label{eq:residual}
    &=
    \frac{2}{\sqrt M}
    \left(
    E_L(\psi_\theta)(x_i) - \frac1M\sum_{k=1}^M E_L(\psi_\theta)(x_k)
    \right),
    \\
    O_{ij} \label{eq:O_matrix}
    &=
    \frac{1}{\sqrt M}
    \left[
        \partial_{\theta_j}\log|\psi_\theta|(x_i)
        -
        \frac1M \sum_{k=1}^M\partial_{\theta_j} \log|\psi_\theta|(x_k)
    \right].
\end{align}
Here $O\in\mathbb R^{M\times P}$ contains the centered log derivatives evaluated on the Monte Carlo samples. Typically, wavefunctions are parametrized in log-space in VMC. 
Note again that $E_L(\psi_\theta) - L(\theta) \to 0$ as $\psi_\theta$ approaches an eigenstate, resulting in a vanishing variance principle for $\nabla L(\theta)$.

\paragraph{Estimating $S(\theta)$} To estimate the overlap matrix $S(\theta)\in\mathbb R^{P\times P}$ defined in \eqref{eq:overlap_matrix}, we note first that the  generating functions of the Galerkin space $\mathcal V_\theta$ are given by 
\begin{equation*}
    \partial_{\theta_i} \hat{\psi}_\theta 
    = 
    \frac{1}{\| \psi_\theta \|}
    \left(
        \partial_{\theta_i}\psi_\theta
        -
        \frac{\langle\psi_\theta, \partial_{\theta_i}\psi_\theta\rangle}{\|\psi_\theta\|^2}\psi_\theta
    \right), \quad i=1,\dots,P.
\end{equation*}
A short computation shows that $S(\theta)$ can be expressed solely in terms of expectations with respect to the Born density as
\begin{equation*}\begin{split}
    S(\theta)_{ij}
    &=
    \Big\langle\big(
        \partial_{\theta_i} \log |\psi_\theta| - \langle\partial_{\theta_i} \log |\psi_\theta|\rangle
        \big)
        \big(
        \partial_{\theta_j} \log |\psi_\theta| - \langle\partial_{\theta_j} \log |\psi_\theta|\rangle
    \big)\Big\rangle
    \approx
    (O^\top O)_{ij}. 
\end{split}\end{equation*}

\paragraph{Estimating $H(\theta)$} 
Similar to the case of the overlap matrix, we can provide an estimator for the discretization of the Hamiltonian matrix~\cite{toulouse2007optimization,umrigar2007alleviation, webber2022rayleigh}. 
\begin{lemma}[Hamiltonian Estimator]
The estimator of the Hamiltonian matrix $H(\theta)$ is derived from the expression 
\begin{equation*}
\begin{split}
    H(\theta)_{ij}
    &=
    \Big\langle
        \left[
            E_{L,j} 
            +
            E_L
            \left(
                \partial_{\theta_j}\log|\psi_\theta| 
                -
                \left\langle \partial_{\theta_j}\log|\psi_\theta|\right\rangle
            \right)
        \right]
        \cdot
        \left[
            \partial_{\theta_i}\log|\psi_\theta| 
            -
            \left\langle \partial_{\theta_i}\log|\psi_\theta|\right\rangle
        \right]
    \Big\rangle,
\end{split}
\end{equation*}
where $E_{L,j} = \partial_{\theta_j}E_L$ denotes the derivative of the local energy. Its estimator on the samples $(x_i)_{i=1,\dots,M}$ is given by
\begin{equation}
    H(\theta) \approx
    O^\top A, \label{eq:H_matrix_estimator}
\end{equation}
where the matrix $A \in \mathbb{R}^{M \times P}$ is defined as 
\begin{equation*}
    A_{ij} 
    =
    \frac{1}{\sqrt{M}}\left[\partial_{\theta_j}E_L(\psi_\theta)(x_i) 
    + 
    E_L(\psi_\theta)(x  _i)O_{ij}\right]. 
\end{equation*}
Despite the symmetry of $H(\theta)$, 
the estimator is only 
asymptotically symmetric.
\end{lemma}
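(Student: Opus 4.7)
The plan is to start from the exact expression $H(\theta)_{ij} = \langle \partial_{\theta_i}\hat\psi_\theta, \hat H \partial_{\theta_j}\hat\psi_\theta\rangle$ and reduce both slots to quantities expressible as expectations against the Born density $p_\theta \propto \psi_\theta^2$. First, using the formula for $\partial_{\theta_i}\hat\psi_\theta$ already displayed in the preceding subsection together with $\partial_{\theta_i}\psi_\theta = \psi_\theta\,\partial_{\theta_i}\log|\psi_\theta|$ and the identity
\begin{equation*}
\frac{\langle\psi_\theta,\partial_{\theta_i}\psi_\theta\rangle}{\|\psi_\theta\|^2}
= \langle \partial_{\theta_i}\log|\psi_\theta|\rangle,
\end{equation*}
I would rewrite $\partial_{\theta_i}\hat\psi_\theta = \hat\psi_\theta\,\tilde O_i$ with the centered log-derivative $\tilde O_i := \partial_{\theta_i}\log|\psi_\theta| - \langle\partial_{\theta_i}\log|\psi_\theta|\rangle$.

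Next, since $\hat H$ does not depend on $\theta$, it commutes with $\partial_{\theta_j}$; combined with the defining identity $\hat H\psi_\theta = E_L\,\psi_\theta$, this yields
\begin{equation*}
\hat H\,\partial_{\theta_j}\psi_\theta = \partial_{\theta_j}(E_L\psi_\theta)
= \psi_\theta\!\left[E_{L,j} + E_L\,\partial_{\theta_j}\log|\psi_\theta|\right].
\end{equation*}
Inserting this into the analogous decomposition of $\hat H\,\partial_{\theta_j}\hat\psi_\theta$ and collecting terms, the subtracted $\langle\partial_{\theta_j}\log|\psi_\theta|\rangle$ piece combines with $E_L$ to produce a centered factor, giving $\hat H\,\partial_{\theta_j}\hat\psi_\theta = \hat\psi_\theta\left[E_{L,j} + E_L\,\tilde O_j\right]$. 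Taking the $\mathcal H$-inner product with $\hat\psi_\theta\,\tilde O_i$ converts the integral against $\hat\psi_\theta^2$ into the Born-density expectation claimed in the statement.

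The Monte Carlo estimator then follows by replacing the two occurrences of $\langle\partial_{\theta_\bullet}\log|\psi_\theta|\rangle$ by their empirical means over $(x_k)_{k=1,\dots,M}$, which is precisely how $O$ is defined in \eqref{eq:O_matrix}; writing out the resulting sum $\tfrac{1}{M}\sum_k \tilde O_i(x_k)\big[E_{L,j}(x_k) + E_L(x_k)\tilde O_j(x_k)\big]$ factors as the matrix product $O^\top A$ with $A$ as defined. For the final remark on asymmetry, I would note that exact symmetry of $H(\theta)_{ij}$ relies on self-adjointness of $\hat H$, whereas the derivation above applied $\hat H$ only to the $j$-slot; at finite $M$ the empirical centering in $O$ is independent of the one implicit in $A$, so symmetry is restored only in the $M\to\infty$ limit via the law of large numbers.

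The main obstacle I anticipate is the careful bookkeeping in the second step: verifying that the subtracted mean term coming from the normalization of $\partial_{\theta_j}\hat\psi_\theta$ lines up exactly with what is needed to produce the centered $\tilde O_j$ multiplying $E_L$ in $A$, while leaving $E_{L,j}$ uncentered. All other steps are essentially substitutions, but this cancellation is what delivers the clean estimator form and must be tracked carefully, especially since the wavefunction is only assumed real (not positive), so $\partial_{\theta_i}\log|\psi_\theta| = \partial_{\theta_i}\psi_\theta/\psi_\theta$ requires a brief justification on the nodal complement.
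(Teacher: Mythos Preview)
Your proposal is correct and follows essentially the same route as the paper: both compute $\hat H\partial_{\theta_j}\hat\psi_\theta$ by expanding the normalization, pulling out the factor $\hat\psi_\theta$, and recognizing the bracket $E_{L,j} + E_L\tilde O_j$; the paper obtains $\hat H\partial_{\theta_j}\psi_\theta/\psi_\theta$ by rearranging the definition of $E_{L,j}$, whereas you obtain it by commuting $\hat H$ through $\partial_{\theta_j}$ and applying the product rule to $E_L\psi_\theta$, which is the same identity read in the other direction. Your anticipated ``bookkeeping obstacle'' is exactly the cancellation the paper tracks, and it goes through as you describe.
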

\begin{proof}
    Note that the derivative of the local energy is given by
    \begin{equation*}
        E_{L,j} = \frac{\hat H\partial_{\theta_j}\psi_\theta}{\psi_\theta} - \partial_{\theta_j}\log|\psi_\theta|E_L.
    \end{equation*}
    Hence, we compute
    \begin{align*}
        \hat H\partial_{\theta_j}\hat \psi_\theta
        &=
        \frac{1}{\|\psi_\theta\|}
        \left[
            \hat H\partial_{\theta_j}\psi_\theta 
            -
            \frac{\langle \psi_\theta, \partial_{\theta_j}\psi_\theta \rangle}{\|\psi_\theta\|^2}\hat H\psi_\theta
        \right]
        \\
        &=
        \frac{\psi_\theta}{\|\psi_\theta\|}
        \left[
            \frac{\hat H \partial_{\theta_j}\psi_\theta}{\psi_\theta}
            -
            \langle \partial_{\theta_j}\log|\psi_\theta|  \rangle E_L
        \right]
        \\
        &=
        \frac{\psi_\theta}{\|\psi_\theta\|}
        \left[
            E_{L,j}
            +
            \left(
                \partial_{\theta_j}\log|\psi_\theta|
                -
                \langle \partial_{\theta_j}\log|\psi_\theta|  \rangle
            \right) 
            E_L
        \right],
    \end{align*}
    which readily yields the asserted form of the estimator.
\end{proof}

\subsection{Solving the linear systems}\label{subsec:woodbury}
It is common practice to add a regularization with parameter $\varepsilon>0$ to the preconditioners. For stochastic reconfiguration, the regularized linear system that needs to be solved then takes the form 
\begin{equation*}
    (O^TO + \varepsilon I)\xi
    =
    O^Tr.
\end{equation*}
The system size is $P$, which can be in the millions, and the matrix $O^TO + \varepsilon I$ is dense making a direct solution method intractable. We work under the assumption $M\ll P$ and use the Woodbury identity to reformulate
\begin{equation*}
    (O^TO + \varepsilon I)^{-1}O^Tr
    =
    O^T(OO^T + \varepsilon I)^{-1}r.
\end{equation*}
Now, at every iteration, the matrix $OO^T$ needs to be computed, and a linear system in sample space needs to be solved. These operations have complexity $\mathcal O(M^2P)$ and $\mathcal O(M^3)$ respectively, instead of the cubic cost $\mathcal O(P^3)$ in the original formulation. 
This equivalent reformulation is commonly referred to as \emph{minSR} in the context of stochastic reconfiguration~\cite{chen2024empowering, rende2024simple}. The structure $O^\top A$ of the estimator of $H(\theta)$ allows for similar formulations, which we use in the simulations in \Cref{sec:computational-examples}. For the Riemannian Newton methods, we make use of the following formula
\begin{equation*}
    (O^\top A - \tau O^\top O + \varepsilon I)^{-1} O^\top r
    =
    O^\top(A O^\top - \tau O O^\top + \varepsilon I)^{-1} r,
\end{equation*}
which holds for any $\tau\in\mathbb R$.

\section{A Unified Framework for Energy Minimization}\label{sec:functional-framework}
This section discusses functional first- and second-order methods, more precisely, a Riemannian $L^2$ gradient descent on the sphere and a Riemannian Newton method for the minimization of the Rayleigh quotient. 
Identifying stochastic reconfiguration (SR) as an $L^2$ gradient descent allows us to analyze its convergence speed in the small step size limit on the functional level and derive optimal step sizes for the explicit Euler discretization. 
Subsequently, we discuss Newton's method on the sphere and show that it corresponds to Rayleigh quotient iteration in function space, which is known to have favorable convergence properties\footnote{Rayleigh quotient iteration converges locally at a cubic rate, at least in the finite dimensional case $\mathcal H = \mathbb C^{2^N}$. For unbounded operators, we are not aware of a proof in the literature.}
\cite{trefethen2022numerical}. 
Finally, we discuss a globalization approach of the Riemannian Newton method by a diagonal shift in function space and show that it corresponds to shifted inverse iteration on the level of wave functions. 
This allows us to leverage a priori information on the ground-state energy and yields a highly effective method for energy minimization---which we call \emph{projected inverse iteration (PII)}. 
We compare the empirical performance of PII and SR in \Cref{sec:computational-examples}.

\subsection{\texorpdfstring{$L^2$} G Gradient Descent}\label{sec:SR}
This section discusses energy minimization via $L^2$ gradient descent on the sphere which leads to the well-known stochastic reconfiguration scheme. We begin with a concise derivation and discuss practical implications. 

\subsubsection*{Synopsis}
To perform energy minimization, consider the following functional scheme that starts with a given $\psi_0\in\mathbb S$ and proceeds via the update rule
\begin{equation}\label{eq:L2-GD-function-space}
    \psi_{k+1} 
    =
    \frac{\psi_k - \eta_k \nabla E(\psi_k)}{\|\psi_k - \eta_k \nabla E(\psi_k)\|}.
\end{equation}
The update direction is given by the gradient $\nabla E(\psi_k) = 2[\hat H\psi_k - E(\psi_k)\psi_k]$ with respect to the $L^2$ metric restricted to $\mathbb S$, see~\eqref{eq:first-derivative-RQ}. 
By definition, $\nabla E(\psi_k)$ is characterized as the unique solution $d_k\in T_{\psi_k}\mathbb S$ to the linear equation
\begin{equation}\label{eq:linear_equation_for_sr_direction}
    \langle d_k, v \rangle 
    =
    DE(\psi_k)[v]
    =
    2 \langle \hat H \psi_k, v \rangle
    \quad \text{for all } v\in T_{\psi_k} \mathbb S. 
\end{equation}
In light of \Cref{sec:Galerkin_discretization}, this equation can be used to perform a Galerkin discretization of the functional algorithm \eqref{eq:L2-GD-function-space} in the space $\mathcal V_\theta$ spanned by the derivatives of the normalized wavefunctions, see \eqref{eq:galerkin-space}, 
where $\psi_k$ is parametrized as $\psi_k=\psi_{\theta_k}$. 
This leads to the overlap matrix  $S(\theta_k)_{ij} = \langle \partial_{\theta_i}\hat\psi_{\theta_k}, \partial_{\theta_j}\hat\psi_{\theta_k} \rangle$ for the left-hand side of equation \eqref{eq:linear_equation_for_sr_direction} and the energy gradient $2\langle \hat H \hat \psi_{\theta_k}, \partial_{\theta_i}\hat\psi_{\theta_k} \rangle = \partial_{\theta_i} L(\theta_k)$ stemming from the right-hand side. Introducing a regularization parameter $\varepsilon_k \geq 0$, we obtain
\begin{equation}\label{eq:SR}
    \theta_{k+1} = \theta_{k} - \eta_k (S(\theta_k) + \varepsilon_k I)^{-1}\nabla L(\theta_k). 
\end{equation}
The preconditioned gradient scheme~\eqref{eq:SR} is known as \emph{stochastic reconfiguration (SR)} \cite{sorella2001generalized} or \emph{quantum natural gradient descent} and is widely used in practice~\cite{carleo2017solving, lange2024architectures, medvidovic2024neural}. 

Without regularization, we obtain an update direction $\xi$ satisfying $S(\theta)\xi = \nabla L(\theta)$ and therefore the function space update can be approximated according to 
\begin{equation}
    \hat\psi_{\theta-\eta \xi} 
    =
    \hat\psi_{\theta} 
    -
    \eta \sum_{i=1}^P \xi_i \partial_{\theta_i} \hat\psi_\theta + O(\eta^2) = 
    \hat\psi_{\theta} - \eta d_{\mathcal V_\theta} + O(\eta^2). 
\end{equation}
By Galerkin orthogonality, we immediately obtain that $d_{\mathcal V_\theta}$ is the $L^2$ orthogonal projection of $\nabla E(\psi_\theta)$ onto the Galerkin space $\mathcal V_\theta\subset T_{\psi_\theta}\mathbb S$, see~\Cref{eq:lem:cea}.
In words, stochastic reconfiguration with small step sizes leads to function space updates approximating the projection of the $L^2$ gradient onto the Galerkin space. 
Assuming expressive spaces $\mathcal V_\theta$, we can hence study the non-parametric scheme \eqref{eq:L2-GD-function-space} to gain insight into the optimization dynamics of SR. 
We will do this in the remainder of this subsection, where, in particular, we show the following: 
\begin{itemize}
    \item 
    In the small step size limit, the dynamics of \eqref{eq:L2-GD-function-space} converge to a ground state at a linear rate, that is, at an exponential speed, with exponent given by the spectral gap of $\hat H$, see \Cref{thm:convergence_speed}. 
    \item We explicitly compute the optimal step sizes for \eqref{eq:L2-GD-function-space} in function space. 
\end{itemize}

The small step size limit of~\eqref{eq:L2-GD-function-space} agrees up to a constant scaling of time with the so-called imaginary-time Schrödinger flow, 
which yields an explicit relation of the convergence to the Hamiltonian spectrum as shown in the next section~\cite{sorella1998green, sorella2001generalized, becca2017quantum, yuan2019theory}.

\subsubsection*{Mathematical Derivations}
To understand the convergence behavior of the gradient descent on the sphere given in \eqref{eq:L2-GD-function-space}, we consider its small step size limit. 
Hence, for an arbitrary initial value $\psi_0\in\mathbb S$, we consider a solution $\psi\colon[0,\infty)\to\mathbb S \cap \operatorname{dom}(\hat H)$ of the $L^2$ gradient flow on $\mathbb S$, which is described by   
\begin{equation}\label{eq:gradient_flow_sphere}
    \begin{split}
        \psi'(t) 
        &=
        - \nabla E(\psi(t)) 
        =
        -2 [ 
        \hat H\psi(t)  - E(\psi(t))\psi(t)],
        \\
        \psi(0) &= \psi_0.
    \end{split}
\end{equation}
This is equivalent to $\langle\psi'(t), v\rangle = -2\langle\hat H\psi(t), v\rangle$ for all $v\in T_{\psi(t)}\mathbb S$, where the time derivative is to be understood in the classical sense with values in $\mathcal H$. 

The solution to \eqref{eq:gradient_flow_sphere} can be explicitly described by normalization of the corresponding evolution on $\mathcal H$, meaning that $\psi(t)\propto e^{-2t\hat H} \psi_0$
if $\hat H$ generates a semigroup on $\mathcal H$. 
The un-normalized dynamics $e^{- 2t\hat H} \psi_0$ are---up to the constant scaling factor of $2$ in time---known as \emph{imaginary time evolution}~\cite{neuscamman2012optimizing}, to be compared to real-time quantum dynamics $e^{- i t\hat H} \psi_0$ describing the evolution of quantum states via the Schrödinger equation. 
We will see that although the convergence of $\psi(t)$ to a ground state is linear, the rate depends on the spectral gap $E_1-E_0$ and can thus deteriorate. 

\begin{theorem}[Continuous Dynamics]\label{thm:convergence_speed}
    We assume that $\hat H$ is a densely defined, self-adjoint linear operator on $\mathcal H$. 
    Further, we assume that $E_0 \coloneqq \inf \sigma(\hat H)>-\infty$ is an eigenvalue, and that $E_1 \coloneqq\inf (\sigma(\hat H)\setminus\{E_0\})>E_0$ and consider a solution $(\psi(t))_{t\ge0}$ of the gradient flow~\eqref{eq:gradient_flow_sphere}. 
    We denote the projection onto the eigenspace $V$ of $E_0$ by $P_0$, assume that the initial value 
    has overlap with the ground states, meaning $\psi_0\notin V^\perp$, and set $ v_0\coloneqq \frac{P_0 \psi_0}{\lVert P_0 \psi_0 \rVert}$. 
    Then it holds that 
    \begin{equation}
        \left\| v_0 - \psi(t)\right\|
        \leq
        \sqrt{2} \alpha e^{-2(E_1-E_0)t}, 
    \end{equation}
    where $\alpha=\frac{\lVert \psi_0-P_0\psi_0 \rVert}{\lVert P_0 \psi_0 \rVert}$. 
    Note that $\alpha>0$, unless $\psi_0\in V$ is already a ground state. 
\end{theorem}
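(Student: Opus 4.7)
The plan is to reduce the gradient flow on the sphere to the explicit formula $\psi(t) = \phi(t)/\|\phi(t)\|$ where $\phi(t)=e^{-2t\hat H}\psi_0$ is the imaginary time evolution, and then to analyze the convergence by spectral decomposition with respect to $\hat H$. Since $\hat H$ is self-adjoint and lower bounded, the functional calculus provides a strongly continuous semigroup $(e^{-2t\hat H})_{t\ge0}$, so $\phi(t)$ is well-defined and $\phi(t)\in\operatorname{dom}(\hat H)$ for $t>0$, with $\phi'(t)=-2\hat H\phi(t)$.

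First I would verify that $\psi(t)$ as defined above solves~\eqref{eq:gradient_flow_sphere}. Differentiating $\psi=\phi/\|\phi\|$ with the product rule and using $\phi'=-2\hat H\phi$ gives
\begin{equation*}
    \psi'(t)
    =
    \frac{\phi'(t)}{\|\phi(t)\|}
    -
    \frac{\langle \phi(t),\phi'(t)\rangle}{\|\phi(t)\|^3}\phi(t)
    =
    -2\hat H\psi(t) + 2E(\psi(t))\psi(t)
    =
    -\nabla E(\psi(t)),
\end{equation*}
and $\psi(0)=\psi_0$ since $\|\psi_0\|=1$. Note $\|\phi(t)\|>0$ for all $t$ because $P_0\psi_0\neq 0$.

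Next I would exploit the spectral decomposition. Writing $u_0 = P_0\psi_0$ and $w_0=(I-P_0)\psi_0$, which are orthogonal, the semigroup acts on the ground-state component by a scalar: $e^{-2t\hat H}u_0 = e^{-2tE_0}u_0$. On the invariant subspace $V^\perp$, the spectral theorem and the assumption $\sigma(\hat H)\setminus\{E_0\}\subset [E_1,\infty)$ yield the operator norm bound $\|e^{-2t\hat H}|_{V^\perp}\|\le e^{-2tE_1}$, so for $\tilde w(t)\coloneqq e^{-2t\hat H}w_0\in V^\perp$ we have $\|\tilde w(t)\|\le e^{-2tE_1}\|w_0\|$. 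Hence
\begin{equation*}
    \phi(t) = e^{-2tE_0}u_0 + \tilde w(t),\qquad
    \|\phi(t)\|^2 = e^{-4tE_0}\|u_0\|^2 + \|\tilde w(t)\|^2
\end{equation*}
by orthogonality of $u_0\in V$ and $\tilde w(t)\in V^\perp$.

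Finally I would compute the distance to $v_0=u_0/\|u_0\|$. Using $\|\psi(t)-v_0\|^2 = 2-2\langle\psi(t),v_0\rangle$ together with the orthogonality above,
\begin{equation*}
    \langle\psi(t),v_0\rangle
    =
    \frac{e^{-2tE_0}\|u_0\|}{\|\phi(t)\|}
    =
    \frac{1}{\sqrt{1+s(t)}}, \qquad s(t) \coloneqq e^{4tE_0}\frac{\|\tilde w(t)\|^2}{\|u_0\|^2} \le \alpha^2 e^{-4t(E_1-E_0)}.
\end{equation*}
The elementary inequality $1-1/\sqrt{1+s}\le s/2$ for $s\ge 0$ (equivalent to $\sqrt{1+s}\le 1+s/2$) then gives $\|\psi(t)-v_0\|^2 \le s(t)\le \alpha^2 e^{-4t(E_1-E_0)}$, which is in fact tighter than the claimed bound by a factor of $\sqrt 2$.

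The main obstacle is the infinite-dimensional setting: I need to justify that $\phi(t)=e^{-2t\hat H}\psi_0$ is differentiable with values in $\mathcal H$, takes values in $\operatorname{dom}(\hat H)$ for $t>0$, and satisfies $\phi'(t)=-2\hat H\phi(t)$. This follows from the standard semigroup theory for self-adjoint, lower bounded operators via the Borel functional calculus, but should be invoked explicitly. Once this is granted, the remaining argument is the above direct spectral computation.
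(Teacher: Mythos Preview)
Your proof is correct and follows essentially the same route as the paper: identify $\psi(t)$ with the normalized imaginary-time evolution $e^{-2t\hat H}\psi_0/\lVert e^{-2t\hat H}\psi_0\rVert$, decompose orthogonally into the ground-state eigenspace and its complement, and bound the semigroup on $V^\perp$ via the spectral mapping theorem. Your final step, using $\lVert\psi(t)-v_0\rVert^2=2-2\langle\psi(t),v_0\rangle$ together with $1-1/\sqrt{1+s}\le s/2$, is a bit cleaner than the paper's Pythagorean splitting into two terms and indeed yields the sharper constant (dropping the $\sqrt{2}$).
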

\begin{proof}
First, note that $-2\hat H$ generates a semigroup $(e^{-2t \hat H})_{t\ge0}$ of self-adjoint operators as it is self-adjoint and has a spectrum that is bounded from above~\cite[Example II.3.27]{engel2000one}. 
For $\psi_0\in \mathcal H$ with unit norm, we consider the 
auxiliary function $\xi(t)= e^{-2t\hat H} \psi_0$. 
Setting
we compute that 
\begin{equation*}
\begin{split}
    \psi'(t) 
    &=
    \frac{\xi'(t)\|\xi(t)\| - \left\langle \frac{\xi(t)}{\|\xi(t)\|},\xi'(t) \right\rangle \xi(t)}{\|\xi(t)\|^2}
    \\ &
    =
    -\frac{2\hat H \xi(t)}{\|\xi(t)\|}
    +
    \left\langle \frac{\xi(t)}{\|\xi(t)\|}, \frac{2\hat H\xi(t)}{\|\xi(t)\|} \right\rangle\frac{\xi(t)}{\|\xi(t)\|}
    =
    -2[\hat H\psi(t) - E(\psi(t)) \psi(t)],
\end{split}
\end{equation*}
which shows that $\psi$ solves \eqref{eq:gradient_flow_sphere}. 

We consider the orthogonal decomposition $\mathcal H = V+W$, where $V$ is the eigenspace corresponding to $E_0$ and $W= V^\perp$, where we used that $V$ is closed as it is the eigenspace of a self-adjoint operator. 
Note that both $V$ and $W\cap\operatorname{dom} \hat H$ are invariant under $\hat H$, hence we can study the induced semigroup on the two subspaces as $e^{-2t\hat H} \psi_0 = e^{-2t\hat H} v_0 + e^{-2t\hat H} w_0$ for $ \psi_0 =  v_0 + w_0$, where $ v_0\in V$ and $ w_0\in W$. 
For $v\in V$ we have $e^{-2t\hat H} v=e^{-2tE_1}v$, hence, it remains to estimate $e^{-2t\hat H} w$ for $w\in W$. 
Thus, we want to understand the growth constant of the semigroup $e^{-2t\hat H}$ on $W$, which agrees with the spectral radius under the spectral mapping property. 
In our case, we first note that $\hat H|_W$ is self-adjoint on $W$ and thus the semigroup is self-adjoint and the spectral mapping theorem guarantees $\sigma(e^{-2t\hat H|_W}) = e^{-2t \sigma(\hat H)}$, see~\cite[Corollary 3.14]{engel2000one}. 
Therefore, we can use the spectral mapping theorem to see that 
\begin{equation*}
    \lVert e^{-2t\hat H|_W} \rVert 
    =
    \sup \sigma(e^{-2t\hat H|_W}) 
    =
    \sup e^{-2t \sigma(\hat H|_W)} 
    =
    e^{-2t \inf \sigma(\hat H|_W)} 
    =
    e^{-2E_1 t},
\end{equation*}
which guarantees $\lVert e^{-2t\hat H}  w_0 \rVert \le e^{-2E_1 t} \lVert  w_0 \rVert$. 

Using 
the Pythagorean theorem, we estimate  
\begin{align}\label{eq:help-convergence-l2-flow}
    \begin{split}
        \left\lVert \frac{v_0}{\lVert v_0 \rVert} - 
    \frac{e^{-2t\hat H} \psi_0}{\lVert e^{-2t\hat H} \psi_0\rVert}
    \right\rVert^2 
    & = 
    \left\lVert \frac{v_0}{\lVert v_0 \rVert} - \frac{e^{-2E_0 t} v_0}{\lVert e^{-2t\hat H} \psi_0\rVert} \right\rVert^2 
    +
    \left\lVert \frac{ e^{-2t\hat H} w_0}{\lVert e^{-2t\hat H} \psi_0\rVert^2} \right\rVert^2 \\ 
    & = 
    \left(1 - \frac{e^{-2E_0 t} \lVert v_0 \rVert}{\lVert e^{-2t\hat H} \psi_0\rVert}\right)^2
     + 
     \frac{\rVert e^{-2t\hat H} w_0\rVert^2 }{\lVert e^{-E_0 t} v_0 \rVert^2 + \lVert e^{-t \hat H} w_0 \rVert^2}.
    \end{split} 
\end{align}
We upper bound~\eqref{eq:help-convergence-l2-flow} via 
\begin{equation*}
    \begin{split}
    \left\lVert \frac{v_0}{\lVert v_0 \rVert} - \frac{e^{-2t\hat H} \psi_0}{\lVert e^{-2t\hat H} \psi_0\rVert}
    \right\rVert^2 
    & \le 
    \left(1 - \frac{e^{-2E_0 t} \lVert v_0 \rVert}{e^{-2E_0 t} \lVert v_0\rVert + e^{-2E_1 t} \lVert w_0 \rVert}\right)^2 
     +
     \left(\frac{ e^{-2E_1 t}\rVert w_0\rVert }{e^{-2E_0 t} \lVert v_0\rVert}\right)^2 \\ 
    & \le
    \frac{2 \lVert w_0 \rVert^2}{\lVert v_0 \rVert^2} \cdot e^{-4(E_1-E_0) t},
\end{split}
\end{equation*}
which concludes the proof. 
\end{proof}

\begin{remark}[Lower bound]
Note that we can also use~\eqref{eq:help-convergence-l2-flow} to establish a lower bound on the convergence speed of the $L^2$ gradient flow. 
Indeed, we can simply drop the first term and obtain 
\begin{equation*}
    \left\lVert v_1 - \psi(t) 
    \right\rVert^2 
    \ge 
    \frac{\rVert e^{-2t\hat H} w_0\rVert^2 }{\lVert  e^{-E_0 t} v_0 \rVert^2 + \lVert e^{-t \hat H} w_0 \rVert^2} \ge 
    \frac{\rVert e^{-2t\hat H} w_0\rVert^2 }{e^{-4E_0 t} \lVert \psi_0 \rVert^2} 
    =
    \frac{\rVert e^{-2t\hat H} w_0\rVert^2 }{e^{-4E_0 t}}.
\end{equation*}
Hence, the lower bound depends on the growth or decay of $\rVert e^{-2t\hat H} w_0\rVert$. 
Assume, for example, that $E_1$ is in the point spectrum and that the projection $P_1\psi_0$ onto the eigenspace $V_1$ of $E_1$ is non-trivial. 
Then, we obtain a lower bound of 
\begin{equation*}
    \left\lVert v_1 - \psi(t) 
    \right\rVert 
    \ge
    \frac{\lVert e^{-2t\hat H} \psi_0 \rVert}{e^{-2E_0 t}} 
    \ge
    e^{-2(E_1-E_0) t} \lVert P_0\psi_0 \rVert. 
\end{equation*}
\end{remark}

\Cref{thm:convergence_speed} indicates that stochastic reconfiguration converges at a linear rate with deteriorating convergence speed as the spectral gap 
closes. 
In practice, we use the explicit Euler discretization \eqref{eq:L2-GD-function-space}, for which the optimal step size can be computed explicitly by minimizing the energy on the affine span of $\psi$ and $d$. 

\begin{lemma}[Optimal Step Sizes]\label{lem:optimal_step_size}
    Let $\psi$ be a member of $\operatorname{dom}(\hat H)$ not necessarily normalized and let $d\in T_{\psi}\mathbb S$ denote the $L^2$ gradient of $E$ on the sphere.
    Then the optimal step size $\eta^*$
    \begin{equation*}
        \eta^* = \operatorname{argmin}_{\substack{\eta > 0}}E(\psi + \eta d)
    \end{equation*}
    is given by \begin{equation}\label{eq:optimal_step_sr}
        \eta^*
        =
        \frac{
        \|\psi\|}{E(d)-E(\psi) + \sqrt{(E(d) - E(\psi))^2+\|d\|^2}}.
    \end{equation}
\end{lemma}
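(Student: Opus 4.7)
The plan is to reduce the problem to a one-dimensional calculus problem, exploiting the scale-invariance of $E$ and the tangency $\langle \psi, d \rangle = 0$ to obtain a rational function of $\eta$ with a very simple structure. First, I would write
\begin{equation*}
    f(\eta) \coloneqq E(\psi + \eta d)
    =
    \frac{\langle \psi, \hat H \psi \rangle + 2\eta \langle \psi, \hat H d \rangle + \eta^2 \langle d, \hat H d \rangle}{\|\psi\|^2 + \eta^2 \|d\|^2},
\end{equation*}
where the denominator has no linear term in $\eta$ precisely because $d \in T_\psi \mathbb{S}$. Dividing numerator and denominator by $\|\psi\|^2$ and $\|d\|^2$ respectively turns the coefficients into $E(\psi)$, $E(d)$, and the single cross-term $\langle \psi, \hat H d\rangle$.

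The second step identifies the cross-term using the defining property of the Riemannian $L^2$ gradient. Taking $v = d$ in the identity $\langle \nabla E(\hat\psi), v\rangle = DE(\hat\psi)(v) = \frac{2}{\|\psi\|}\langle \hat H\psi, v\rangle$ (valid since $\langle\psi, v\rangle = 0$) and using $d = \pm \nabla E(\hat\psi)$ together with $\hat H$ self-adjoint yields
\begin{equation*}
    \langle \psi, \hat H d\rangle
    =
    \pm\frac{\|d\|^2 \|\psi\|}{2},
\end{equation*}
where the sign corresponds to $d$ being the ascent or descent direction. This collapse of the Hamiltonian cross-term into a multiple of $\|d\|^2\|\psi\|$ is the key simplification: it means that beyond $E(\psi)$, $E(d)$, $\|\psi\|$, and $\|d\|$, no further spectral quantities enter the stationarity condition.

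Third, I would compute $f'(\eta) = 0$. Writing this out as $N'(\eta) D(\eta) - N(\eta) D'(\eta) = 0$, the degree-$3$ terms in $\eta$ cancel (as expected from the projective nature of $E$), and a short computation reduces the equation to the quadratic
\begin{equation*}
    \|d\|^2 \eta^2 + 2(E(d) - E(\psi))\|\psi\|\eta - \|\psi\|^2 = 0,
\end{equation*}
taking the sign in the previous step consistent with $d$ being the descent direction so that the resulting minimizer is positive. Applying the quadratic formula and picking the unique positive root gives
\begin{equation*}
    \eta^* = \frac{\|\psi\|\bigl[\sqrt{(E(d)-E(\psi))^2 + \|d\|^2} - (E(d) - E(\psi))\bigr]}{\|d\|^2},
\end{equation*}
which rationalizes to the stated form~\eqref{eq:optimal_step_sr}. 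A sign check at $\eta=0$ together with the fact that $f$ is bounded below (by $E_0$) confirms that this critical point is the global minimum on $(0,\infty)$.

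The only mildly delicate step is fixing the sign convention so that $\operatorname{argmin}_{\eta > 0}$ is actually attained; this amounts to remembering that the SR update is $\psi - \eta \nabla E(\psi)$, so within the formula $E(\psi + \eta d)$ the relevant $d$ is the descent direction. Everything else is mechanical. As a sanity check, one can rederive the same formula by restricting the Rayleigh quotient to the two-dimensional subspace $\operatorname{span}\{\psi, d\}$: the minimizing $\eta$ must correspond to the smaller Ritz value of the $2\times 2$ projected Hamiltonian, whose eigenvalues are $\tfrac12(E(\psi)+E(d)) \pm \tfrac12 \sqrt{(E(d)-E(\psi))^2+\|d\|^2}$, giving the same answer upon reading off the coordinates of the corresponding Ritz vector.
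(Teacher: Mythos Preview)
Your proof is correct. Your primary route---expand $f(\eta)=E(\psi+\eta d)$, use the gradient identity to collapse $\langle\psi,\hat H d\rangle$ to $\pm\tfrac12\|d\|^2\|\psi\|$, differentiate, and solve the resulting quadratic---differs from the paper's argument, which instead recasts the line search as a $2\times2$ generalized Rayleigh quotient $\min_\eta (1,\eta)A(1,\eta)^\top/(1,\eta)B(1,\eta)^\top$, computes the eigenvalues of $B^{-1}A$, and reads off $\eta^*$ from the eigenvector of the smaller eigenvalue. You mention exactly this Ritz-value computation as your closing sanity check, so you in fact have both proofs. The calculus route is more elementary and makes the cancellation of the cubic term explicit; the paper's eigenvalue route makes transparent that the optimal energy after the step is the smaller Ritz value $\tfrac12(E(\psi)+E(d))-\tfrac12\sqrt{(E(d)-E(\psi))^2+\|d\|^2}$, which is a nice byproduct. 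Your explicit handling of the sign convention (taking $d$ to be the descent direction so that the minimizer lies at positive $\eta$) is a point the paper glosses over---its proof silently uses $\langle\hat H\psi,d\rangle=-\tfrac12\|d\|^2\|\psi\|$, which is the descent-direction sign.
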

\begin{remark}
    The formulation of the Lemma above is precisely the situation encountered in practice when the ansatz $\psi_\theta$ is not normalized, but thanks to Monte Carlo sampling the energy gradient and the overlap matrix are computed for the normalized ansatz. In this case \eqref{eq:optimal_step_sr} is the optimal step size for the functional algorithm. Note that $d$ is given by
    \begin{equation*}
        d
        =
        \nabla E(\psi/\|\psi\|)
        =
        \frac{2}{\|\psi\|}
        \left[
            \hat H\psi - E(\psi)\psi
        \right],
    \end{equation*}
    hence invariant under rescaling $\psi$. The same holds true for the remaining terms in the denominator, hence the scaling of the optimal step size $\eta^*$ is entirely determined by $\|\psi\|$. This scaling can serve as an explanation for the manual learning rate tuning needed for optimal SR performance~\cite{goldshlager2024kaczmarz}. 
\end{remark}
\begin{proof}
    Writing out the minimization problem via expanding the energy $E(\psi + \eta d)$ and using $\langle\psi, \psi\rangle=1$, $\langle \psi, d\rangle = 0$ and 
    $\langle \hat H\psi, d\rangle = -\frac12\|d\|^2$ we get that 
    \begin{equation*}
    \begin{split}
        \min_{\eta > 0} E(\psi + \eta d) 
        &=
        \min_{\eta > 0}
        \frac{
        \begin{pmatrix}
            1 & \eta
        \end{pmatrix}
        \begin{pmatrix}
            \langle\hat H\psi, \psi\rangle & -\frac12\|d\|^2\|\psi\| \\
            -\frac12\|d\|^2\|\psi\| & \langle\hat Hd, d\rangle
        \end{pmatrix}
        \begin{pmatrix}
            1 \\
            \eta
        \end{pmatrix}
        }
        {
        \begin{pmatrix}
            1 & \eta
        \end{pmatrix}
        \begin{pmatrix}
            \langle \psi, \psi \rangle & 0 \\
            0 & \langle d, d\rangle
        \end{pmatrix}
        \begin{pmatrix}
            1 \\
            \eta
        \end{pmatrix}
        }
        \\
        &=
        \min_{\eta > 0}
        \frac{
        \begin{pmatrix}
            1 & \eta
        \end{pmatrix}
        A
        \begin{pmatrix}
            1 \\
            \eta
        \end{pmatrix}
        }
        {
        \begin{pmatrix}
            1 & \eta
        \end{pmatrix}
        B
        \begin{pmatrix}
            1 \\
            \eta
        \end{pmatrix}
        }.
    \end{split}
    \end{equation*}
    We proceed by finding the eigenvectors and eigenvalues of $M = B^{-1}A$. The eigenvector of $M$ corresponding to the minimal eigenvalue 
    will then be normalized to contain unity in the first component in order to determine $\eta^*$. The matrix $M$ is given by
    \begin{equation*}
        M 
        =
        \begin{pmatrix}
            E(\psi)& -\frac12\frac{\|d\|^2}{\|\psi\|} \\
            -\frac12\|\psi\| & E(d)
        \end{pmatrix}.
    \end{equation*}
    We use that the eigenvalues of $M$ are given by
    \begin{equation*}
    \begin{split}
        \lambda_{1,2}
        &=
        \frac{\operatorname{tr}(M)}{2} \pm \sqrt{\frac14 \operatorname{tr}(M)^2 - \det(M)}
        \\
        &=
        \frac{E(\psi) + E(d)}{2} \pm \frac12\sqrt{ (E(\psi) - E(d))^2 + \|d\|^2 }.
    \end{split}
    \end{equation*}
    The minimal eigenvalue is thus $\lambda_2$ and a corresponding eigenvector is given by
    \begin{equation*}
        \begin{pmatrix}
            \lambda_2 - E(d) \\ -\frac12 \|\psi\|
        \end{pmatrix}.
    \end{equation*}
    Dividing by the first component, we get $\eta^* = -\frac12\|\psi\|/(\lambda_2 - E(d))$. 
\end{proof}

\paragraph{SR as quantum natural gradient}
An alternative interpretation of stochastic reconfiguration can be given in terms of (quantum) natural gradients~\cite{stokes2020quantum}. 
Here, one uses the Fisher-information matrix $F(\theta)$ of the statistical model $p_\theta \propto 
\psi_\theta^2
$ as a preconditioner, which is also known as the Fubini-Study metric. 
It turns out that the Fisher-information matrix $F(\theta)$ agrees with the overlap matrix $S(\theta)$. 
This can either be verified by direct computation or by using the fact that $\psi\mapsto \psi^2$ is a Riemannian isometry between the positive orthants of the $L^2$ and $L^1$ spheres with respect to the $L^2$ and Fisher-Rao metrics, respectively, see~\cite[pp. 128]{ay2017information}. 
We refer to~\cite{lange2024architectures} for a more detailed discussion of quantum natural gradients. 
While this interpretation works with the Born densities and connects stochastic reconfiguration to information geometry, it seems more natural for us to study the problem on the sphere $\mathbb S$.

\subsection{Riemannian Newton}\label{sec:Newton}
We have seen that stochastic reconfiguration arises from an $L^2$ gradient descent, where we now discuss a Riemannian-Newton method for the energy $E$ on the sphere $\mathbb S$. 
We prove that it corresponds to Rayleigh quotient iteration. 
Again, we provide a concise discussion before turning towards rigorous results. 

\subsubsection*{Synopsis}
Given an initial iterate $\psi_0\in\mathbb S \cap \operatorname{dom}(\hat H)$ 
Newton's method on the sphere with the retraction by normalization is given by
\begin{align}
    \label{eq:newton_sphere_1}
    d_k 
    &=
    D^2E(\psi_k)^{-1}[DE(\psi_k)],
    \\
    \label{eq:newton_sphere_2}
    \psi_{k+1}
    &=
    \frac{ \psi_k - d_k}{\| \psi_k - d_k\|}.
\end{align}
Here, the Riemannian Hessian $D^2E(\psi_k)$ and differential $DE(\psi_k)$ are only defined for tangent vectors, see \eqref{eq:first-derivative-RQ} and \eqref{eq:second-derivative-RQ}. Writing out equation \eqref{eq:newton_sphere_1} shows that the update direction $d_k$ solves the linear equation
\begin{equation}\label{eq:characterization-Newton-direction}
    \langle
        (\hat H - E(\psi_k)I)d_k, v
    \rangle
    =
    \langle 
        \hat H \psi_k, v
    \rangle
    \quad 
    \text{for all }v\in T_{\psi_k}\mathbb S.
\end{equation}
As described in \Cref{sec:Galerkin_discretization}, this equation can be used to Galerkin discretize the functional algorithm \eqref{eq:newton_sphere_1}\&\eqref{eq:newton_sphere_2} in the space $\mathcal V_\theta$ spanned by the derivatives of the normalized wavefunctions, see \eqref{eq:galerkin-space}, where $\psi_k = \hat \psi_{\theta_k}$. 
The Hamiltonian term leads to the Hamiltonian matrix $H(\theta_k)_{ij} = \langle \partial_{\theta_i}\hat\psi_{\theta_k}, \hat H\partial_{\theta_j}\hat\psi_{\theta_k} \rangle$ and the identity yields the overlap matrix $S(\theta_k)$. 
The discretization of the right-hand side results in $\frac12 \nabla L(\theta_k)$. 
As~\eqref{eq:newton_sphere_1} is a Newton scheme, it does not incorporate a step size. 
When using a nonlinear ansatz $\psi_\theta$, we employ a linearization in the parameters for the discretization. 
As this is only valid locally, we introduce a step size to account for the nonlinearity of the ansatz. 
Adding regularization $\varepsilon_k$, we obtain the preconditioned gradient scheme 
\begin{equation}\label{eq:discretized_newton_scheme}
    \theta_{k+1}
    =
    \theta_k
    -\frac{\eta_k}{2}
    \left(
    H(\theta_k) - L(\theta_k)S(\theta_k) + \varepsilon_k I
    \right)^{-1}\nabla L(\theta_k).
\end{equation}

Note that the bilinear form $D^2(\psi)(u,v) = \langle (\hat H - E(\psi) I) u, v \rangle$ appearing in \eqref{eq:characterization-Newton-direction} can be indefinite. 
Hence, Galerkin orthogonality and C\'ea's lemma do not automatically hold. 
However, we show in \Cref{lem:cea-newtons-method} in the following subsection that if the closest eigenvalue to $E(\psi_{\theta_k})$ is the ground state energy $E_0$, then 
$D^2E(\psi_{\theta_k})$
is indeed an inner product on $T_{\psi_{\theta_k}}\mathbb S \cap \operatorname{dom}(\hat H)$ and consequently the matrix $H(\theta_k) - L(\theta_k)S(\theta_k)$ is positive semi-definite. 
Moreover, this guarantees that 
for any $\xi$ solving $(H(\theta_k) - L(\theta_k)S(\theta_k))\xi = -\frac12\nabla L(\theta_k)$ the function space update $\sum_{i=1}^P\xi_i\partial_{\theta_i}\hat \psi_\theta$ is the orthogonal projection of the Newton direction $d_k$ to the space $\mathcal V_\theta$ with respect to the inner product induced by $\hat H - L(\theta_k)I$. 
Hence, close to the ground state we can think of \eqref{eq:discretized_newton_scheme} as a projected Riemannian Newton method. 

As a complementary perspective on \eqref{eq:newton_sphere_1}\&\eqref{eq:newton_sphere_2}, we offer an interpretation as a textbook eigenvalue algorithm. 
As Lemma~\ref{lem:newton_direction_representation} shows, it holds
\begin{equation}
    \frac{\psi_k - d_k}{\| \psi_k - d_k \|}
    =
    \frac{(\hat H - E(\psi_k)I)^{-1}\psi_k}{\| (\hat H - E(\psi_k)I)^{-1}\psi_k \|},
\end{equation}
which is known as \emph{Rayleigh quotient iteration}, converging with a 
locally cubic rate~\cite{trefethen2022numerical}. 
We make the correspondence between the Riemannian Newton method and Rayleigh quotient iteration precise in \Cref{lem:newton_direction_representation}.

\subsubsection*{Mathematical Derivations}
We provide a best-approximation result for the Galerkin discretizations of the Riemannian Newton method, and an interpretation as Rayleigh quotient iteration.

\begin{lemma}\label{lem:cea-newtons-method}
    We assume that $\hat H$ is a densely defined, self-adjoint linear operator on $\mathcal H$, that $E_0 \coloneqq \inf \sigma(\hat H)>-\infty$ is an eigenvalue, and that it holds $E_1 \coloneqq\inf (\sigma(\hat H)\setminus\{E_0\})>E_0$. Additionally, we assume that the ground state has multiplicity one and we let $\psi\in\mathbb S\cap\operatorname{dom}(\hat H)$ be such that $E(\psi)<\frac{E_0+E_1}{2}$. Then $a(v,w)= \langle \hat Hv, w\rangle - E(\psi)\langle v, w\rangle$ is an inner product on $T_\psi\mathbb S\cap \operatorname{dom}(\hat H)$. Consequently, for $d=D^2E(\psi)^{-1}[DE(\psi)]$ and its Galerkin discretization $d_\mathcal V$ the following best approximation holds:
    \begin{equation}\label{eq:quasi_best_approx}
        \|d_{\mathcal V} - d\|
        \le \alpha^{-1}
        \| d_{\mathcal V} - d \|_a 
        =
        \inf_{q\in\mathcal V_\theta}\| q - d \|_a, 
    \end{equation}
    where the $a$-norm is given by $\|\cdot\|_a=\sqrt{a(\cdot, \cdot)}$ and 
    $E(\psi)=\frac{E_0+E_1}{2}-\alpha$. 
\end{lemma}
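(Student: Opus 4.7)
The plan is to prove the two claims separately: first that $a$ defines an inner product on $T_\psi\mathbb{S}\cap\operatorname{dom}(\hat H)$, and then that the quasi-best-approximation estimate follows from the standard Galerkin-orthogonality argument in the $a$-inner product combined with a coercivity bound relating $\|\cdot\|$ and $\|\cdot\|_a$.

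For the first claim, bilinearity and symmetry of $a$ are immediate from the linearity and self-adjointness of $\hat H$, so only positive definiteness requires real work. The key observation I would exploit is that the assumption $E(\psi)<(E_0+E_1)/2$ forces $\psi$ to have substantial overlap with the one-dimensional ground eigenspace. Writing $p=\langle\psi_0,\psi\rangle$ and decomposing $\psi=p\psi_0+q_\psi$ with $q_\psi\perp\psi_0$, the spectral bound $\langle\hat Hq_\psi,q_\psi\rangle\ge E_1\|q_\psi\|^2$ on $\psi_0^{\perp}$ yields a quantitative lower bound on $p^2$, of the form $p^2\ge 1/2+\alpha/(E_1-E_0)$; in particular $p\neq 0$. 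Then for arbitrary $v\in T_\psi\mathbb{S}\cap\operatorname{dom}(\hat H)$, I would decompose $v=r\psi_0+s$ with $s\perp\psi_0$, use the tangency constraint $\langle v,\psi\rangle=0$ together with Cauchy-Schwarz to control $|r|$ by $\|s\|$, and combine with the spectral lower bound $\langle\hat Hv,v\rangle\ge r^2E_0+E_1\|s\|^2$. Straightforward algebra using $E(\psi)=(E_0+E_1)/2-\alpha$ then yields a coercivity estimate $a(v,v)\ge c(\alpha)\|v\|^2$ with an explicit positive constant $c(\alpha)$ depending on $\alpha$ and the spectral gap, establishing positive definiteness.

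For the best-approximation estimate, once $a$ is known to be an inner product on $T_\psi\mathbb{S}\cap\operatorname{dom}(\hat H)$, the Galerkin solution $d_\mathcal V$ is characterized as the $a$-orthogonal projection of $d$ onto $\mathcal V_\theta$, so $a(d-d_\mathcal V,q)=0$ for all $q\in\mathcal V_\theta$. This is the analogue of~\eqref{eq:lem:cea}: the Pythagorean identity in the $a$-norm gives $\|d-q\|_a^2=\|d-d_\mathcal V\|_a^2+\|d_\mathcal V-q\|_a^2$ and hence $\|d_\mathcal V-d\|_a=\inf_{q\in\mathcal V_\theta}\|q-d\|_a$. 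Feeding the coercivity bound into this identity converts the $a$-norm estimate into the claimed $L^2$ bound $\|d_\mathcal V-d\|\le\alpha^{-1}\|d_\mathcal V-d\|_a$.

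The main obstacle I anticipate is obtaining the coercivity constant with the precise dependence on $\alpha$ advertised in the statement: the interplay between the tangency constraint $\langle v,\psi\rangle=0$, the shift $E(\psi)=(E_0+E_1)/2-\alpha$, and the spectral lower bound on $\hat H|_{\psi_0^\perp}$ is delicate, and care is needed to avoid losing factors when combining the estimates on $r$ and $s$. A secondary technical point is that $\hat H$ is unbounded, so the spectral decomposition must be understood via the functional calculus, and all computations must be carried out on the domain $\operatorname{dom}(\hat H)$ where $a(v,v)$ is finite; this is precisely why the statement is formulated on $T_\psi\mathbb{S}\cap\operatorname{dom}(\hat H)$.
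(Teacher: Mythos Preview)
Your proposal is correct and follows essentially the same strategy as the paper: decompose $\psi$ and $v$ orthogonally relative to the ground state $\psi_0$, use the spectral gap to bound the Rayleigh quotient on $\psi_0^\perp$, and combine with the tangency constraint $\langle v,\psi\rangle=0$ to obtain coercivity of $a$ on $T_\psi\mathbb{S}$. The Galerkin best-approximation part is identical.

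The presentations differ slightly. The paper phrases the decomposition in angular terms, writing $v=\cos\theta\,\psi_0+\sin\theta\,w$, and argues by contrapositive: any unit $v$ with $E(v)\le\tfrac{E_0+E_1}{2}$ has $|\theta|\le\pi/4$, while $\psi$ has angle strictly less than $\pi/4$ with $\psi_0$; consequently $\langle\psi,v\rangle>0$, so such $v$ cannot lie in $T_\psi\mathbb{S}$. This immediately gives $E(v)>\tfrac{E_0+E_1}{2}$ on the tangent space and hence $a(v,v)\ge\alpha\lVert v\rVert^2$. Your route is the direct computational version of the same idea, using the overlap $p=\langle\psi_0,\psi\rangle$ and Cauchy--Schwarz on the tangency condition instead of angles. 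The paper's contrapositive is shorter; your direct estimate, if carried through carefully, actually yields the slightly sharper coercivity constant $2\alpha$ (tight already in the two-dimensional case $\mathcal H=\mathbb R^2$), though this refinement is not needed for the lemma.
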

\begin{proof}
Note that it suffices to show $\langle v, \hat Hv\rangle \ge \frac{E_0+E_1}{2}>E(\psi)$ for any tangent vector $v\in T_\psi\mathbb S, \lVert v \rVert=1$
as we can then evoke \Cref{eq:lem:cea}. 
We denote the eigenspace of $E_0$ by $V_0 = \operatorname{span}\{\psi_0\}$. 
We can write any normed $v\in \mathcal H$ as $v = \cos(\theta)\psi_0+\sin(\theta) w$ orthogonally, where $\theta = \sphericalangle(v,\psi_0) = \langle v, \psi_0\rangle$ denotes the angle between $v$ and $\psi_0$ and $w$ depends on $v$. 
We obtain 
\begin{equation*}
    E(v) = \cos(\theta)^2 E_0 + \sin(\theta)^2 E(w) \ge \cos(\theta)^2 E_0 + \sin(\theta)^2 E_1
\end{equation*}
as we assumed a unique ground state. 
Hence, $E(v)\le\frac{E_0+E_1}{2}$ implies $\cos(\theta)^2\ge \sin(\theta)^2 $ and thus $\theta\in[-\frac\pi4, \frac\pi4]$ and analogously for a strict inequality. 
In particular, we obtain $\sphericalangle(\psi, \psi_0)\in(-\frac\pi4, \frac\pi4)$. 
Further, for any $v\in \mathcal H$ with $E(v)\le\frac{E_0+E_1}{2}$ we obtain $\sphericalangle(v, \psi_0)\in[-\frac\pi4, \frac\pi4]$ and thus $\sphericalangle(\psi, v)\in(-\frac\pi2,\frac\pi2)$ implying $v\notin T_\psi\mathbb S$. 
\end{proof}

Note that in the case of SR such an argument was not necessary as we always worked with $L^2$ projections. The above also crucially relies on the fact that we work in the tangent spaces $T_\psi\mathbb S$, as clearly the result cannot be correct when replacing $T_\psi\mathbb S$ by $\operatorname{dom}(\hat H)$, unless $\hat H$ was already positive definite.

\begin{remark}[Inf-sup conditions]\label{remark:inf_sup}
    To strengthen the norms in  \eqref{eq:quasi_best_approx}, we can consider the map 
    \begin{equation*}
        T_\psi\colon T_\psi\mathbb S\cap \operatorname{dom}(\hat H) \to T_\psi^*\mathbb S,
        \quad
        v\mapsto \langle v,\hat H\cdot\rangle - E(\psi)\langle v,\cdot\rangle
    \end{equation*}
    and show that it is bijective in a neighborhood of the ground state and thus satisfies inf-sup conditions. If furthermore inf-sup conditions hold on the discrete space $\mathcal V_\theta$, we can strengthen the norm of the left-hand side in \eqref{eq:quasi_best_approx} to the graph norm on $\operatorname{dom}(\hat H)$, see \cite{xu2003some}.
\end{remark}

\begin{lemma}[Newton Direction Representation]\label{lem:newton_direction_representation}
    Let $\psi\in\mathbb S \cap \operatorname{dom}(\hat H)$ and assume that $E(\psi)\notin\sigma(\hat H)$. Then \begin{equation}
    \label{eq:newton_direction_representation}
        d
        =
        \psi 
        - 
        \frac{(\hat H - E(\psi)I)^{-1}\psi}{\langle (\hat H - E(\psi)I)^{-1}\psi, \psi \rangle}. 
    \end{equation}
    is a member of $T_\psi\mathbb S \cap \operatorname{dom}(\hat H)$ and solves~\eqref{eq:characterization-Newton-direction}. 
    Under the assumptions of \Cref{lem:cea-newtons-method}, the Newton direction is unique. 
\end{lemma}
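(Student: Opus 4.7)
The plan is to verify the three distinct claims of the lemma directly from the closed form of $d$: membership in $\operatorname{dom}(\hat H)$, tangency at $\psi$, and the Galerkin identity \eqref{eq:characterization-Newton-direction}, followed by uniqueness under the additional hypotheses. The natural auxiliary object is $\phi \coloneqq (\hat H - E(\psi)I)^{-1}\psi$, which is well-defined as an element of $\operatorname{dom}(\hat H)$ precisely because the hypothesis $E(\psi)\notin\sigma(\hat H)$ means $(\hat H - E(\psi)I)^{-1}$ is a bounded operator mapping $\mathcal H$ into $\operatorname{dom}(\hat H)$. By construction $\phi$ satisfies $(\hat H-E(\psi)I)\phi=\psi$, which is the single algebraic fact that will drive the computation. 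One should note that the formula tacitly requires $\langle\phi,\psi\rangle\neq 0$; I would flag this explicitly as part of the standing hypothesis on the Newton direction (this is the usual non-degeneracy condition that also guarantees Rayleigh quotient iteration is well-posed).

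Regarding the two easy memberships, since $\psi\in\operatorname{dom}(\hat H)$ by assumption and $\phi\in\operatorname{dom}(\hat H)$ as noted, the linear combination $d = \psi - \phi/\langle\phi,\psi\rangle$ lies in $\operatorname{dom}(\hat H)$. Tangency follows from a one-line calculation:
\begin{equation*}
    \langle d,\psi\rangle
    = \langle\psi,\psi\rangle - \frac{\langle\phi,\psi\rangle}{\langle\phi,\psi\rangle}
    = 1 - 1 = 0,
\end{equation*}
using $\|\psi\|=1$.

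For the Galerkin identity, I would apply $\hat H - E(\psi)I$ directly to $d$, exploiting $(\hat H-E(\psi)I)\phi=\psi$:
\begin{equation*}
    (\hat H - E(\psi)I)d
    = \hat H\psi - E(\psi)\psi - \frac{\psi}{\langle\phi,\psi\rangle}.
\end{equation*}
Pairing with any $v\in T_\psi\mathbb S$, the two terms involving $\psi$ vanish by $\langle\psi,v\rangle=0$, leaving exactly $\langle\hat H\psi,v\rangle$, which is \eqref{eq:characterization-Newton-direction}. This is the crux of the computation; no approximation or regularity machinery is needed because $\phi$ was engineered to kill the $\hat H$-unbounded piece against tangent test vectors.

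Uniqueness is then a standard bilinear-form argument using \Cref{lem:cea-newtons-method}: under its hypotheses the form $a(u,v)=\langle(\hat H-E(\psi)I)u,v\rangle$ is an inner product on $T_\psi\mathbb S\cap\operatorname{dom}(\hat H)$. If $d_1,d_2$ both solve \eqref{eq:characterization-Newton-direction}, then $d_1-d_2\in T_\psi\mathbb S\cap\operatorname{dom}(\hat H)$ satisfies $a(d_1-d_2,v)=0$ for all $v\in T_\psi\mathbb S$; testing with $v=d_1-d_2\in T_\psi\mathbb S$ and using positive definiteness of $a$ forces $d_1=d_2$. The main conceptual obstacle is not any of these mechanical steps but rather the unstated non-degeneracy $\langle\phi,\psi\rangle\neq 0$: this fails exactly at Krylov-type configurations (e.g., equal superpositions of two eigenstates with shift at the midpoint), so I would either fold it into the hypothesis list or argue that, under the proximity assumption $E(\psi)<(E_0+E_1)/2$ of \Cref{lem:cea-newtons-method}, the spectral expansion of $\phi$ is dominated by the ground-state component and the overlap with $\psi$ is strictly positive.
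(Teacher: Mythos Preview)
Your proof is correct and follows exactly the approach the paper indicates: the paper's own proof simply states that invertibility of $\hat H-E(\psi)I$ makes $d$ well defined and that ``it is straightforward to verify'' both tangency and~\eqref{eq:characterization-Newton-direction}, which is precisely the computation you carry out. Your explicit flag that the formula requires $\langle\phi,\psi\rangle\ne 0$ is a valid caveat that the paper silently absorbs into ``well defined''; note, however, that your proposed spectral argument for positivity of this overlap under the hypotheses of \Cref{lem:cea-newtons-method} is not quite right as stated, since for $E_0<E(\psi)<\tfrac{E_0+E_1}{2}$ the ground-state contribution to $\langle\phi,\psi\rangle$ is negative while the excited-state contributions are positive.
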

\begin{proof}
    By assumption, $\hat H - E(\psi) I$ is invertible, which makes $d$ given by equation \eqref{eq:newton_direction_representation} well defined. It is straightforward to verify that $d$ satisfies the equation in the tangent space and is in fact a member of $T_\psi\mathbb S$.
\end{proof}

Substituting equation \eqref{eq:newton_direction_representation} into the Newton iteration in \eqref{eq:newton_sphere_1}\&\eqref{eq:newton_sphere_2} allows us to rewrite it in a single line
\begin{equation*}
    \psi_{k+1}
    =
    \frac{(\hat H - E(\psi_k)I)^{-1}\psi_k}{\|(\hat H - E(\psi_k)I)^{-1}\psi_k\|},
\end{equation*}
which is a well known iterative eigensolver known as \emph{Rayleigh quotient iteration}~\cite{trefethen2022numerical}. 
Hence, the Riemannian Newton method corresponds to Rayleigh quotient iteration and converges to an eigenstate corresponding to the eigenvalue closest in energy to the initial energy  $E(\psi_0)$ at a cubic rate in the finite-dimensional setting~\cite{trefethen2022numerical}.

\subsection{Globalized Riemannian Newton}\label{sec:inverse_iter}
In the previous section we saw, that the Riemannian Newton method corresponds to Rayleigh quotient iteration and converges to the closest eigenstate in energy. To guarantee global convergence to the ground state we need a globalization strategy. To this end, we consider a shifted Newton iteration and show that it corresponds to shifted inverse iteration. If a priori knowledge on the ground state energy is available, this can be leveraged for the optimal choice of shift parameter, compare also to the numerical experiments in \Cref{sec:computational-examples}.

\subsubsection*{Synopsis}
To control the spectrum of $D^2E$ and prevent negative eigenvalues, we employ a diagonal shift, for a discussion of this globalization strategy in the Euclidean case see \cite[Chapter 3.4]{nocedal1999numerical}. Precisely, given an initial iterate $\psi_0\in\mathbb S \cap \operatorname{dom}(\hat H)$ and shift values $\tau_k$, we consider the scheme
\begin{align}
    \label{eq:global_newton_sphere_1}
    d_k 
    &=
    -(D^2E(\psi_k)+\tau_kI)^{-1}[DE(\psi_k)],
    \\
    \label{eq:global_newton_sphere_2}
    \psi_{k+1}
    &=
    \frac{ \psi_k - d_k}{\| \psi_k - d_k\|},
\end{align}
Writing out the Newton equation explicitly yields
\begin{equation}
    \langle
        [
            \hat H + (\tau_k - E(\psi_k))I
        ]d_k,
        v
    \rangle
    =
    \langle 
        \hat H \psi_k, v
    \rangle
    \quad 
    \text{for all }v\in T_{\psi_k}\mathbb S.
\end{equation}
Analogous to the case of Newton's method, we can rewrite the iterates as 
\begin{equation}
    \psi_{k+1}
    =
    \frac{(\hat H - (E(\psi_k)-\tau_k) I)^{-1}\psi_k}{\| (\hat H - (E(\psi_k)-\tau_k))^{-1}\psi_k \|},
\end{equation}
thereby offering an interpretation of the regularized Newton scheme~\eqref{eq:global_newton_sphere_1}\&\eqref{eq:global_newton_sphere_2} as shifted inverse iteration~\cite{trefethen2022numerical}.
Translating to parameter space, as in the case of Newton's method, this yields a preconditioned gradient method. Adding regularization $\varepsilon_k$, and introducing a step-size $\eta_k$ \eqref{eq:newton_sphere_2} we obtain
\begin{equation}\label{eq:discretized_global_newton_scheme}
    \theta_{k+1}
    =
    \theta_k
    -\frac{\eta_k}{2}
    \left(
    H(\theta_k) + (\tau_k - L(\theta_k))S(\theta_k) + \varepsilon_k I
    \right)^{-1}\nabla L(\theta_k).
\end{equation}
To obtain meaningful updates, the preconditioning matrix $H(\theta_k) +(\tau_k -L(\theta_k))S(\theta_k)$ in the formula \eqref{eq:discretized_global_newton_scheme} should be positive semi-definite. In fact, the ideal choice for the shift is $\tau_k = L(\theta_k) - E_0$, where $E_0$ is the ground state energy. In this case the globalized Newton scheme is identical to inverse iteration with the optimal shift $E_0$
\begin{equation*}
    \psi_{k+1}
    =
    \frac{(\hat H - E_0 I)^{-1}\psi_k}{\| (\hat H - E_0 I)^{-1}\psi_k \|}.
\end{equation*}
In practice, $E_0$ is not or only approximately known. We focus our analysis on the case where $\tau = \tau_k - L(\theta_k)$ is constant but not necessarily equal to the ground state energy.
We summarize our main insights as follows: 
\begin{itemize}
    \item Newton's method is locally attracting and converges to the closest eigenstate in energy. This does not translate directly to parameter space, as preconditioning matrices are indefinite and thus difficult to regularize. Consequently, convergence to the ground state relies on a good initial guess.
    \item Globalizing Newton's method by a diagonal shift corresponds to inverse iteration. This allows principled choices of shifts---especially if the ground state energy is approximately known. Choosing $\tau=\tau_k-L(\theta_k)$ fixed, the convergence rate is linear and the rate of the linear convergence improves the closer $\tau$ is to the ground state energy.
\end{itemize}

\subsubsection*{Mathematical Derivations}

To analyze the convergence of the shifted Newton method \eqref{eq:discretized_global_newton_scheme}, we use its connection to inverse iteration and assume that $\tau_k$ is chosen such that $\tau=\tau_k-L(\theta_k)$ is fixed.
\begin{theorem}[Convergence of Inverse Iteration]\label{theorem:convergence_shifted_inverse_iteration}
    Assume that the Hamiltonian $\hat H$ is a densely defined, self-adjoint operator on $\mathcal H$. Let $\tau \in \mathbb R$ be a shift that is not in $\sigma(\hat H)$ and assume that there exists a closest point $E_J$ to $\tau$ in the spectrum $\sigma(\hat H)$ and that $E_J$ is in the point spectrum. Assume that $E_J$ is isolated in the sense that 
    \begin{equation*}
        |\tau - E_J|
        <
        \inf_{E\in\sigma(\hat H)\setminus\{E_J\} }
        |\tau - E| = \zeta
    \end{equation*}
    We denote the eigenspace corresponding to $E_J$ by $V$ and assume that the initial function $\psi_0 \in \mathcal H$ has nonzero overlap with $V$, meaning that  
    the projection $P_{V}\psi_0$ of $\psi_0$ onto $V$ is non-zero and set $v_J\coloneqq \frac{P_{V}\psi_0}{\lVert P_{V}\psi_0 \rVert}\in V$. 
    Then the iterates 
    \[
        \psi_{k+1}
        =
        \frac{(\hat H - \tau I)^{-1}\psi_k}{\| (\hat H - \tau I)^{-1}\psi_k \|}
    \] 
    generated by the inverse iteration method with shift $\tau$ satisfy
    \begin{equation}
        \|\psi_k \pm v_J \| 
        =
        \mathcal O
        \left(
        \left|\frac{E_J - \tau}{\zeta}\right|^k
        \right),
        \quad
        |E(\psi_k) - E_J|
        =
        \mathcal O
        \left(
        \left|\frac{E_J - \tau}{\zeta}\right|^{2k}
        \right). 
    \end{equation}
\end{theorem}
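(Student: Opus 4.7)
The plan is to carry out the standard spectral-decomposition analysis of shifted inverse iteration, adapted to the unbounded self-adjoint setting. The core idea is to split $\psi_0$ into its component along the eigenspace $V$ of $E_J$ and its component in $W = V^\perp$, and to exploit the fact that the resolvent $(\hat H - \tau I)^{-1}$ preserves both subspaces but attenuates the $W$-component much more strongly.

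First I would establish invariance. Since $V$ is closed (as $\hat H$ is self-adjoint) and $\hat H$-invariant, a one-line argument using self-adjointness shows that $W \cap \operatorname{dom}(\hat H)$ is $\hat H$-invariant as well, and hence so is $(\hat H - \tau I)^{-1}$ on both $V$ and $W$. Writing $\psi_0 = u_0 + w_0$ with $u_0 = P_V \psi_0 \neq 0$ and setting $\xi_k \coloneqq (\hat H - \tau I)^{-k}\psi_0$, invariance yields the orthogonal decomposition
\[
    \xi_k = (E_J - \tau)^{-k} u_0 + r_k, \qquad r_k \in W.
\]
Since $\hat H$ restricted to $W$ is self-adjoint with spectrum contained in $\sigma(\hat H) \setminus \{E_J\}$, the spectral mapping theorem (in the form used in the proof of \Cref{thm:convergence_speed}) gives the operator-norm bound $\|((\hat H - \tau I)|_W)^{-1}\| \le 1/\zeta$, whence $\|r_k\| \le \zeta^{-k} \|w_0\|$. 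Factoring out $(E_J - \tau)^{-k}$ one obtains $\xi_k = (E_J - \tau)^{-k}(u_0 + \rho_k)$ with $\rho_k \coloneqq (E_J - \tau)^k r_k$ satisfying $\|\rho_k\| \le (|E_J - \tau|/\zeta)^k \|w_0\|$ and $\rho_k \perp u_0$. The normalized iterate is $\psi_k = s_k (u_0 + \rho_k)/\|u_0 + \rho_k\|$ for some sign $s_k = \pm 1$. Orthogonality gives $\|u_0 + \rho_k\| \ge \|u_0\|$, and the elementary estimate
\[
    \left\| \frac{u_0 + \rho_k}{\|u_0 + \rho_k\|} - \frac{u_0}{\|u_0\|} \right\| \le \frac{2\|\rho_k\|}{\|u_0\|}
\]
then yields $\|\psi_k - s_k v_J\| = O(|(E_J - \tau)/\zeta|^k)$, proving the first claim.

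For the energy estimate, the same orthogonal decomposition kills the cross terms in both $\langle \xi_k, \hat H \xi_k\rangle$ and $\|\xi_k\|^2$, so
\[
    E(\psi_k) - E_J = \frac{\langle r_k, (\hat H - E_J) r_k\rangle}{\|\xi_k\|^2}.
\]
The key algebraic step is the identity $(\hat H - E_J) r_k = (\hat H - \tau) r_k + (\tau - E_J) r_k = r_{k-1} + (\tau - E_J) r_k$, which together with Cauchy--Schwarz and the bound $\|r_j\| \le \zeta^{-j}\|w_0\|$ controls the numerator by $(\zeta + |\tau - E_J|)\zeta^{-2k}\|w_0\|^2$; the denominator is at least $(E_J - \tau)^{-2k}\|u_0\|^2$ by the Pythagoras identity. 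Dividing produces the advertised $O(|(E_J - \tau)/\zeta|^{2k})$ rate.

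The main obstacle is not conceptual but bookkeeping for the unbounded case: one must justify that $\hat H|_W$ is self-adjoint on $W$ with $\sigma(\hat H|_W) \subseteq \sigma(\hat H)\setminus\{E_J\}$, which uses that $E_J$ is isolated and in the point spectrum, together with the invariant-subspace decomposition already employed in \Cref{thm:convergence_speed}. One also needs $r_k \in \operatorname{dom}(\hat H)$ for $k \ge 1$, which is automatic since the resolvent maps into $\operatorname{dom}(\hat H)$. Once these technicalities are in place, everything else is elementary linear algebra on the orthogonal decomposition.
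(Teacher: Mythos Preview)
Your proof is correct and follows essentially the same route as the paper: orthogonal decomposition $\mathcal H = V \oplus W$, invariance of both subspaces under the resolvent, spectral bound $\|((\hat H-\tau I)|_W)^{-1}\|\le\zeta^{-1}$ via the spectral mapping theorem, and the resulting geometric decay of the $W$-component relative to the $V$-component. Your treatment of the energy estimate via the identity $(\hat H - E_J)r_k = r_{k-1} + (\tau - E_J)r_k$ is in fact more explicit than the paper's own proof, which stops after establishing the norm convergence and does not spell out the energy argument.
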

\begin{proof}
    Denote the shifted inverse by $T_\tau = (\hat H - \tau I)^{-1}$. Our assumptions guarantee that the spectrum $\sigma(T_\tau)$ attains its maximum in $\mu_1 = (E_J - \tau)^{-1}$ and furthermore the assumption implies
    \begin{equation*}
        \zeta^{-1} = \sup \sigma(T_\tau)\setminus\{ \mu_1 \} =\mu_2 < \mu_1.
    \end{equation*}
    We denote the eigenspace of $\mu_1$ by $V$ and as $\hat H$ is a closed operator the eigenspace $V$ is closed so that we can consider the orthogonal decomposition of $\mathcal H = V + W$. Note that $T_\tau$ leaves these spaces invariant. Now, we write the initial condition as $\psi_0 = v_0 + w_0$ and consider the $k$-th iterate which can be written as
    \begin{equation*}
        \psi_k
        =
        c_k\mu_1^k\left[
            v_0 + \mu_1^{-k}T_\tau^kw_0
        \right]
        =
        c_k\mu_1^k\left[
            v_0 + \varepsilon_k
        \right],
    \end{equation*}
    where $c_k$ is chosen such that $\|\psi_k\|=1$. We now estimate
    \begin{equation*}
        \left \| \psi_k - \operatorname{sign}(\mu_1)^k\frac{v_0}{\|v_0\|} \right \|
        \leq
        \left|
            c_k\mu_1^k\|v_0\| - \operatorname{sign}(\mu_1)^k
        \right|
        +
        c_k\mu_1^k \|\varepsilon_k\|
    \end{equation*}
    and it remains to show that both summands above are of the order
    \begin{equation*}
        \mathcal O\left( \left|\frac{\mu_2}{\mu_1}\right|^k \right)
        =
        \mathcal O\left( \left| \frac{E_J - \tau}{\zeta} \right|^k \right).
    \end{equation*}
    Let us begin with $\varepsilon_k$, which we can estimate as follows
    \begin{equation*}
        \|\varepsilon_k\|
        \leq 
        \mu_1^{-k}\|T^k_\tau w_0\|
        \leq 
        \|w_0\|\left|\frac{\mu_2}{\mu_1}\right|^k
        \leq 
        \|\psi_0\|\left|\frac{\mu_2}{\mu_1}\right|^k,
    \end{equation*}
    where we used that the restriction of $T_\tau$ to $W$ is self-adjoint and hence its operator norm coincides with the supremum $\mu_2$ of its spectrum. Further, as $\|\psi_k\|=1$ it follows $|c_k\mu_1^k| = \|v_0 + \varepsilon_k\|^{-1} \leq (\|v_0\| - \|\varepsilon_k\|)^{-1} \leq \frac12\|v_0\|^{-1}$ for all $k \geq K_0$, where $K_0=K_0(\|\psi_0\|, |\frac{\mu_2}{\mu_1}|, \|v_0\|)$ is chosen such that $\|\varepsilon_k\|\leq\frac12\|v_0\|$. We estimate for $k>K_0$
    \begin{equation*}
        \begin{split}
            \left|
            c_k\mu_1^k\|v_0\| - \operatorname{sign}(\mu_1)^k
        \right|
        & =
        \left|
            \operatorname{sign}(\mu_1)^k\mu_1^k c_k\|v_0\| - 1
        \right|
        \\ &=
        \left|
            |\mu_1^kc_k|\|v_0\| - 1
        \right|
        \\
        &=
        \left|
            \frac{\|v_0\|}{\|v_0 + \varepsilon_k\|} -1
        \right|
        \\
        &=
        \vert c_k\mu_1^k \vert
         \cdot\left| \| v_0\| - \|v_0 + \varepsilon_k\| \right|
        \\
        &\leq
        |c_k\mu_1^k|\|\varepsilon_k\|
        \\&\leq
        2\frac{\|\psi_0\|}{\|v_0\|}\left| \frac{\mu_1}{\mu_2}\right|^k.
        \end{split}
    \end{equation*}
    Combining the above estimates yields 
    \begin{equation*}
        \left\|
            \psi_k - \operatorname{sign}(\mu_1)^k\frac{v_0}{\|v_0\|}
        \right\| 
        \leq  
        4\frac{\|\psi_0\|}{\|v_0\|}\left| \frac{\mu_1}{\mu_2}\right|^k,
    \end{equation*}
    which concludes the proof.
\end{proof}

\begin{remark}[Closing spectral gap]
    Assume that we choose a shift $\tau<E_0$, then we obtain linear convergence with the constant $
    (E_1-\tau)^{-1}(E_0-\tau)$. 
    When we choose the shift depending on the spectral gap $\tau=
    (1-\gamma)^{-1}(E_1-\gamma E_0)$ we obtain a convergence rate of $\gamma\in(0,1)$, which is independent of the spectral gap. 
\end{remark}

\paragraph{Derivation from inverse iteration}
We conclude with a self-contained derivation of the scheme \eqref{eq:discretized_global_newton_scheme} as a Galerkin-projected shifted inverse iteration. 
This is not strictly necessary for the theoretical analysis, yet we believe it underlines the usefulness of the proposed viewpoint of Galerkin-discretized algorithms. 
To this end, recall that inverse iteration with a fixed shift value $\tau\in\mathbb R$ 
and initial condition $\psi_0\in\operatorname{dom}(\hat H)$ with $\|\psi_0\|=1$ is given by 
\begin{equation*}
    \psi_{k+1} = \frac{(\hat H -\tau I)^{-1}\psi_k}{\|(\hat H -\tau I)^{-1}\psi_k\|}.
\end{equation*}
We cannot directly discretize the above expression, hence we rearrange to obtain the form $\psi_{k+1}= \psi_k - d_k$
\begin{equation*}
    \psi_{k+1}
    =
    \psi_k
    -
    \left[
        \psi_k
        -
        \frac{(\hat H - \tau I)^{-1}\psi_k}{\|(\hat H - \tau I)^{-1}\psi_k\|}
    \right]
    =
    \psi_k
    -
    d_k.
\end{equation*}
Then $d_k$ can be rewritten as
\begin{equation*}
    d_k
    =
    \frac{(\hat H - \tau I)^{-1}
    \left[
        \|(\hat H - \tau I)^{-1}\psi_k\| (\hat H - \tau I)\psi_k - \psi_k 
    \right]
    }{\|(\hat H - \tau I)^{-1}\psi_k\|}.
\end{equation*}
To transfer the scheme to parameter space, we assume that $\psi_k = \psi_{\theta_k}$ and we will drop the subscript $k$ in the following. Then, the Galerkin discretization of the map $\hat H - \tau I:\operatorname{dom}(\hat H)\to \mathcal H^*$ with $\psi\mapsto \langle \hat H\psi - \tau\psi,\cdot\rangle$ in the space $\mathcal V_\theta$ corresponds to the matrix $H(\theta) - \tau S(\theta)$. For the right-hand side, its action on $\partial_{\theta_i}\hat\psi_\theta$ is given by  
\begin{equation*}
    \|(\hat H - \tau I)^{-1}\psi_\theta\|
    \langle 
        \hat H\psi_\theta - \tau\psi_\theta, 
        \partial_{\theta_i}\hat\psi_\theta
    \rangle
    -
    \langle  
        \psi_\theta, 
        \partial_{\theta_i}\hat\psi_\theta 
    \rangle
\end{equation*}
Using that $\psi_\theta\perp \partial_{\theta_i}\hat\psi_\theta$, the above expression equals
\begin{equation*}
    \frac12\|(\hat H - \tau I)^{-1}\psi_\theta\| \|\psi_\theta\|\nabla L(\theta)_i.
\end{equation*}
Hence, the Galerkin discretization of $d$ leads to the system
\begin{equation*}
    [H(\theta) - \tau S(\theta)]\xi 
    =
    -\frac{\|\psi_\theta\|}2\nabla L(\theta).
\end{equation*}
Note that for $\|\psi_\theta\| = 1$ and $\tau = L(\theta)$, this is exactly the discretized version of the Riemannian Newton method introduced in the previous section. For different values of $\tau$, it corresponds to shifted inverse iteration.

\section{Related Literature}
Stochastic Reconfiguration is by far the most popular method for variational wavefunction optimization in VMC. It has been proposed in \cite{sorella1998green, sorella2001generalized} and has since been widely used, both for classic wavefunction ansätze \cite{becca2017quantum, cuzzocrea2020variational}, tensor networks~\cite{liu2025accurate, emonts2020variational, vieijra2021many, wu2025algorithms} and neural quantum states~\cite{carleo2017solving, pescia2024message, vicentini2022netket}. 
It has repeatedly been demonstrated to outperform the established first-order optimization methods, such as Adam and SGD. 
While the classic literature favors large sample sizes and moderate parameter space dimensions, this has drastically changed with the advent of neural wavefunctions, which feature high-dimensional parameter spaces. In this case, the cubic scaling of SR is either handled through Kronecker-factored approximations to the overlap matrix, typically referred to as K-FAC~\cite{martens2015optimizing, pfau2020ab}, or for moderate sample sizes by using the Woodbury identity~\cite{chen2024empowering, rende2024simple}, compare also to \Cref{sec:preliminaries}. In this manuscript, we employ the Woodbury identity rather than K-FAC, although the latter is a viable option, especially building on recent work for Kronecker-factored approximations of preconditioning matrices that involve PDE operators~\cite{dangel2024kronecker}.

In the literature, stochastic reconfiguration is often derived via a time-dependent variational principle~\cite{yuan2019theory} applied to the imaginary time evolution of the Schrödinger equation. 
This viewpoint is similar to our derivation. 
Note, however, that the viewpoint of time-dependent variational principles does not allow for the translation of generic optimization algorithms in the way that we put forth in \Cref{sec:functional-framework}. 

While stochastic reconfiguration is presently the standard method for VMC optimization, there have been several studies exploring different options. One method, called the \emph{linear method} \cite{toulouse2007optimization, nightingale2001optimization, motta2024quantum, umrigar2007alleviation}, explicitly minimizes the Rayleigh quotient in the affine space $\hat\psi_\theta + \mathcal V_\theta$ at every optimization iteration. 
Other works consider the full parameter space Newton method \cite{toulouse2007optimization}. 
The recently proposed Wasserstein quantum Monte Carlo method \cite{neklyudov2023wasserstein} can be derived from our functional viewpoint---as a Galerkin discretization of Wasserstein gradient descent. 
Finally, the Rayleigh-Gauss Newton method \cite{webber2022rayleigh} can be understood within our functional framework as a variant of the globalized Riemannian Newton method. 
In the exposition of the Rayleigh Gauss-Newton method, the matrix $H(\theta)-L(\theta)S(\theta)$ is motivated as an approximation to the full parameter Hessian close to an eigenstate. 
Our findings stress a different perspective---the matrix $H(\theta)-L(\theta)S(\theta)$ is the \emph{Riemannian Hessian} in a suitable basis and can be understood as a projected version of Rayleigh quotient iteration. 
Moreover, it is questionable whether the full parameter Hessian is a desirable preconditioner, see \Cref{remark:newton_for_rq_ill_suited}.
For a complete picture, we review the approach of \cite{webber2022rayleigh}. To this end, for given parameters $\theta\in\Theta$ and a perturbation $\delta\theta\in\Theta$, the Taylor expansion up to second-order of $L$ is
\begin{equation}
    L(\theta + \delta\theta)
    \approx
    L(\theta)
    +
    \nabla L(\theta)\delta\theta
    +
    \delta\theta^\top[H(\theta) - L(\theta)S(\theta) + D(\theta)]\delta\theta,
\end{equation}
where $S(\theta)$ and $H(\theta)$ denote the overlap and Hamiltonian matrix and the matrix $D(\theta)$ contains second derivatives of the normalized wavefunction. It is given by
\begin{equation*}
    D(\theta)_{ij}
    =
    \left\langle
        (\hat H - L(\theta)I)\left( \frac{\psi_\theta}{\|\psi_\theta\|} \right),
        \partial_{\theta_i}\partial_{\theta_j}\left( \frac{\psi_\theta}{\|\psi_\theta\|} \right)
    \right\rangle.
\end{equation*} 
The realization of \cite{webber2022rayleigh} is that at an eigenstate of the Hamiltonian $\hat H$, the term $D(\theta)$ vanishes. 
Together with the high computational cost of second derivatives, the authors motivate the use of $H(\theta) - L(\theta)S(\theta)$ as an approximation of the full Hessian. Furthermore, they propose a regularization of $H(\theta) - L(\theta)S(\theta)$ by adding a multiple of the overlap matrix $S(\theta)$. 
In the light of \Cref{sec:inverse_iter}, the regularization term corresponds to a globalization strategy in the Riemannian Newton method. 
Note that our functional in \Cref{sec:inverse_iter} allows a principled choice of this regularization parameter in the case when an estimate on the ground state energy is available, thus eliminating an ad-hoc choice of this hyperparameter. 
We conclude this section with a well-known fact on the suitability of the full parameter Hessian of $L$.

\begin{remark}\label{remark:newton_for_rq_ill_suited}
    Assume a linearly parametrized wavefunction or the functional, non-parametrized setting. Then, Newton's method applied to the Rayleigh Quotient does only rescale iterates and is therefore not suitable for wavefunction optimization. In fact, using the formulas \eqref{eq:first-derivative-RQ} and \eqref{eq:second-derivative-RQ}, one verifies that the current iterate $\psi_k$ satisfies
    \begin{equation*}
        D^2E(\psi_k)(\psi_k,\cdot)
        =
        -
        \frac{2}{\|\psi_k\|^2}
        \left[
            \langle \hat H\psi_k,\cdot\rangle - E(\psi_k)\langle \psi_k, \cdot \rangle
        \right]
        =
        -DE(\psi_k),
    \end{equation*}
    hence Newton's method with the full Hessian merely rescales iterates $\psi_{k+1} = 2\psi_k$.
\end{remark}

\section{Computational Examples}\label{sec:computational-examples}
This section contains numerical examples illustrating how the theoretical insight into the function space algorithms can be used for the understanding and design of schemes in parameter space. 
In the numerical experiments, we mainly compare stochastic reconfiguration (SR) to the projected inverse iteration (PII) scheme derived from the globalized Newton method \eqref{eq:discretized_global_newton_scheme} (or equivalently derived from the functional version of inverse iteration).  We recall the PII scheme in parameter space
\begin{align}\label{eq:II}
    \begin{split}
    \theta_{k+1}
    &=
    \theta_k
    -
    \eta_k
    (H_k - \tau_kS_k + \varepsilon_k I)^{-1}\nabla L(\theta_k)
    \\
    &=
    \theta_k 
    -
    \eta_k
    (O_k^\top A_k - \tau_k O_k^\top O_k + \varepsilon_k I)^{-1}O_k^\top r_k 
    \end{split}
\end{align}
Here, $H_k$ is the estimator of the Hamiltonian matrix introduced in \eqref{eq:H_matrix_estimator}, $S_k$ is the estimator of the overlap matrix \eqref{eq:overlap_matrix}, $\eta_k$ is the learning rate, $\varepsilon_k$ is a regularization parameter referred to as the diagonal shift, and $\nabla L(\theta_k)$ is the estimated energy gradient. The parameter $\tau_k < 0$ corresponds to the spectral shift of the Hamiltonian, we typically choose it in dependence of the ground state energy $\tau_k = \alpha E_0$, with $\alpha\geq1$. In the overparametrized setting, where the number of parameters is larger than the number of samples, i.e., $M \ll P$, we use the update formula
\begin{equation}\label{eq:minPII}
    \theta_{k+1} 
    =
    \theta_k 
    -
    \eta_k
    O_k^\top(A_k O_k^\top - \tau_k O_k O_k^\top + \varepsilon_k I)^{-1} r_k.
\end{equation}
In resemblance of minimal norm stochastic reconfiguration (minSR) \cite{chen2024empowering, rende2024simple}, we refer to this update scheme as \emph{minPII}. We stress that it is---up to floating point arithmetic---identical to \eqref{eq:II}.

\paragraph{Experimental setup}
The simulations presented in \Cref{subsec:TFIM-fullsum} and \Cref{subsec:TFIM-Heisenberg} have been produced using an implementation based on NetKet~\cite{vicentini2022netket}.  

\subsection{Transverse field Ising $4\times 4$ with full Hilbert sum}\label{subsec:TFIM-fullsum}
We begin with a transverse field Ising model (TFIM) 
\begin{equation}
    \hat{H} = - J \sum_{\langle i, j\rangle} \hat{Z}_i \hat{Z}_j - h \sum_{i=1}^{N} \hat{X}_i
\end{equation}
where $\hat{Z}_i$ and $\hat{X}_i$ are the Pauli matrices applied to site $i$. We use periodic boundary conditions. Our wave function is represented in the $\hat{Z}$ eigenbasis.
We consider a $4\times 4$ lattice, which is small enough to allow the computation of expectation values by summing over the full Hilbert space, bypassing Monte Carlo sampling. We refer to this approach as the \emph{full sum} variational optimization. 
This enables a clear inspection of the respective algorithms independent of sampling noise. 
For PII, we test the shift values $\tau_k = E_0$ and $\tau_k = 1.4 \times E_0$. The learning rates $\eta$ and diagonal shifts $\varepsilon$ for all methods are manually tuned for fastest convergence. We report the relative energy errors over the first $100$ iterations in~\Cref{fig:tfim4x4}. 

\begin{figure}[h]
    \centering
    \includegraphics[width=\linewidth]{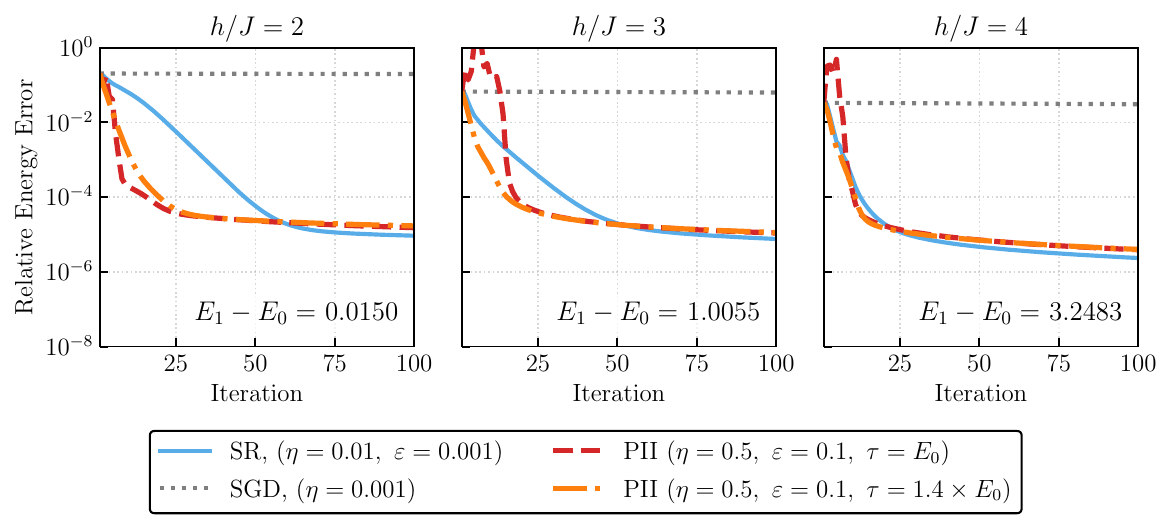}
    \caption{Full sum TFIM $4\times 4$. Performance of stochastic reconfiguration (SR), projected inverse iteration (PII) corresponding to a globalized Riemannian Newton method and stochastic gradient descent (SGD) for a full-sum $4\times 4$ transverse field Ising model. The learning rate is denoted by $\eta$, the regularization parameter is $\varepsilon$, and $\tau$ is the shift parameter in inverse iteration.}
    \label{fig:tfim4x4}
\end{figure}

\subsubsection*{Robustness of PII to narrow spectral gaps} 
We observe that for all problems, SGD stagnates, where the other methods initially decrease the energy quickly and slow down after reaching a relative energy error of around $10^{-5}$.
While the convergence speed of SR differs notably for the three cases $h/J=2,3,4$, this is less pronounced for PII. In fact, PII provides the fastest optimization across the three test cases and the optimization trajectory changes little when varying $h/J$. 
This is in agreement with our theory: The convergence of SR depends on the spectral gap $E_1 - E_0$, see \Cref{thm:convergence_speed}, whereas this dependence can be mitigated for PII by choosing a shift $\tau$ close to $E_0$, see \Cref{theorem:convergence_shifted_inverse_iteration}. 
In this example, we can compute the spectral gap exactly, resorting to a Lanczos iteration, and observe that a deteriorating convergence speed of SR directly relates to a closing spectral gap, compare to the case $h/J=2$, where SR is slowest and the reported spectral gaps in \Cref{fig:tfim4x4}. To be precise the energy reduction factor per iteration step for PII with unit learning rate is 
\begin{equation*}
    r_{\text{PII}} = \left| \frac{E_0 - \tau}{E_1 - \tau}  \right|^2.
\end{equation*}
Choosing $\tau$ close enough to $E_0$ hence prevents a slow-down of the iteration, even for small spectral gaps. In contrast, the energy reduction factor for SR with learning rate $\eta$ shrinks exponentially as the gap $E_1 - E_0$ closes as can be seen from the formula
\begin{equation*}
    r_{\text{SR}} = \left[ \exp(-2|E_1 - E_0|\eta) \right]^2.
\end{equation*}

\subsubsection*{Sensitivity to Sampling} 
For the theoretically optimal shift value of $\tau=E_0$, we frequently observe instable behavior or optimization failure for PII. 
\Cref{fig:tfim4x4_sample_sensitivity} illustrates this instability and failure to optimize are even more pronounced when MCMC sampling is used. 
In this case, only large sample sizes yield a convergent method. However, using a theoretically suboptimal shift value, in this case $\tau=1.2 E_0 <E_0$ suffices to stabilize the method, also in the small sample regime. 
We attribute this stabilization to the improved spectral properties of the preconditioner: The spectrum of $H - \tau E_0 S$ is moved to the right 
as $\tau$ increases, which heuristically makes it easier to retain the positive semi-definiteness when estimated. 
The definiteness, in turn, guarantees a meaningful regularization of the preconditioner when a diagonal shift is added and secures a descent direction in the optimization.
\begin{figure}[h]
    \centering
    \includegraphics[width=0.95\linewidth]{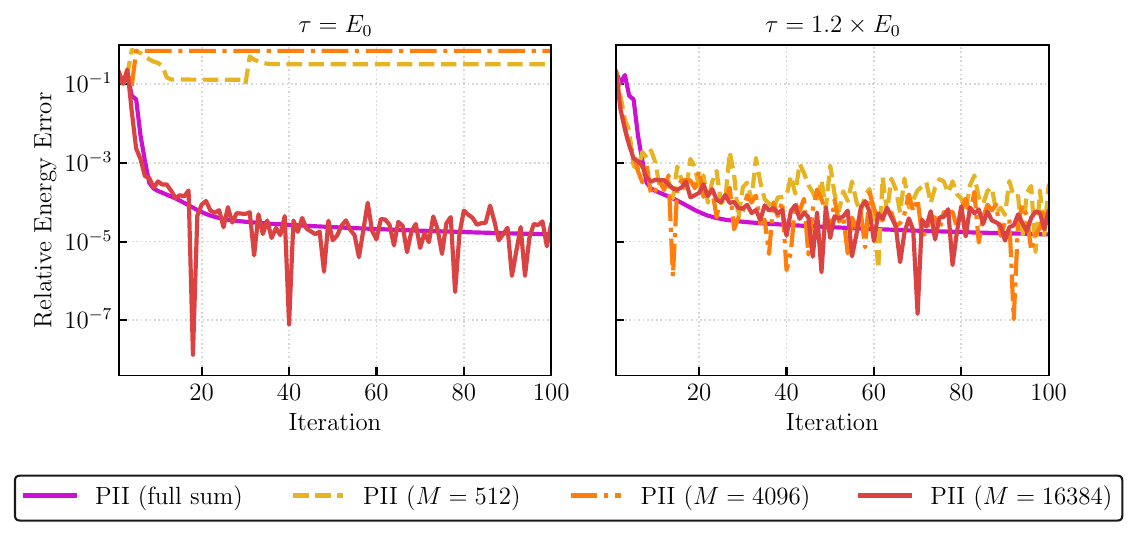}
    \caption{TFIM $4\times 4$, $h/J=2$. Illustration of the effect of sample size $M$ and shift parameter $\tau = \alpha\times E_0$. All optimization trajectories use the regularization parameter $\varepsilon=0.1$ and the learning rate $\eta=0.5$.}
    \label{fig:tfim4x4_sample_sensitivity}
\end{figure}

\subsubsection*{Large learning rates of PII} 
\begin{wrapfigure}{r}{0.53\textwidth}
    \centering
    \includegraphics[width=\linewidth]{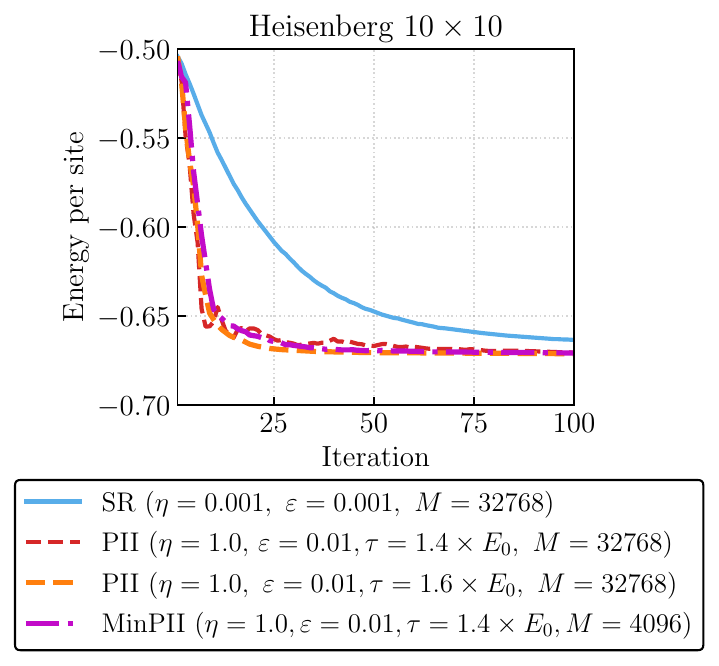}
    \caption{Heisenberg $10\times 10$: Performance of SR and PII using a Vision Transformer with 18,140 trainable parameters and varying hyperparameter settings.}
    \label{fig:heisenberg10x10_ViT_log}
\end{wrapfigure}
We emphasize the drastic differences in learning rates for SR and PII, with PII being able to use high learning rates, often around $0.5$. 
This difference can be explained on the functional level: SR derives from an explicit Euler discretization of the Riemannian $L^2$ flow on $E$ and hence requires small enough time steps in order to remain convergent. 
PII, on the other hand, is derived from inverse iteration on the functional level, where the natural learning rate is 1.0 (or 0.5 depending on the scaling). 
Both methods are Galerkin-projected versions of functional algorithms, hence, transferring learning rates from function space to the ansatz is not guaranteed. 
However, we observe empirically that PII works well with large learning rates. 
We emphasize again that we manually tuned all learning rates for maximal performance. 

\subsection{Transverse field Ising and Heisenberg $10 \times 10$}\label{subsec:TFIM-Heisenberg}
We now investigate the behavior of SR and PII for the same TFIM model on a $10\times 10$ lattice. Additionally, we consider a Heisenberg model 
\begin{align}
    \hat{H} &= \sum_{\langle i, j\rangle} \hat{X}_i \hat{X}_j + \hat{Y}_i \hat{Y}_j + \hat{Z}_i \hat{Z}_j
\end{align}
on the same $10\times 10$ lattice. Due to the exceedingly high dimension $2^{10\cdot10}\approx 10^{30}$ of the Hilbert space, exact computations are no longer feasible, and we resort to Monte Carlo estimators. 
Again, learning rates $\eta$ and diagonal shifts $\varepsilon$ are tuned for maximal performance. As ansätze, we consider both an RBM~\cite{carleo2017solving} and a vision transformer (ViT)~\cite{viteritti2023transformer} with 40400 trainable parameters for the RBM and 18140 trainable parameters for the ViT. For the latter, we take patches of size $2$, embedding dimension $20$, $2$ heads per layer, and $4$ layers.

We compare the performance of the methods for a large batch size of $M=32768$ and a modest one $M=4096$. 
In the case of the RBM, we use the formulation of PII that requires the solution of a linear system in sample size, which we call minPII, see \eqref{eq:minPII}.
\begin{figure}[h]
    \centering
    \includegraphics[width=\linewidth]{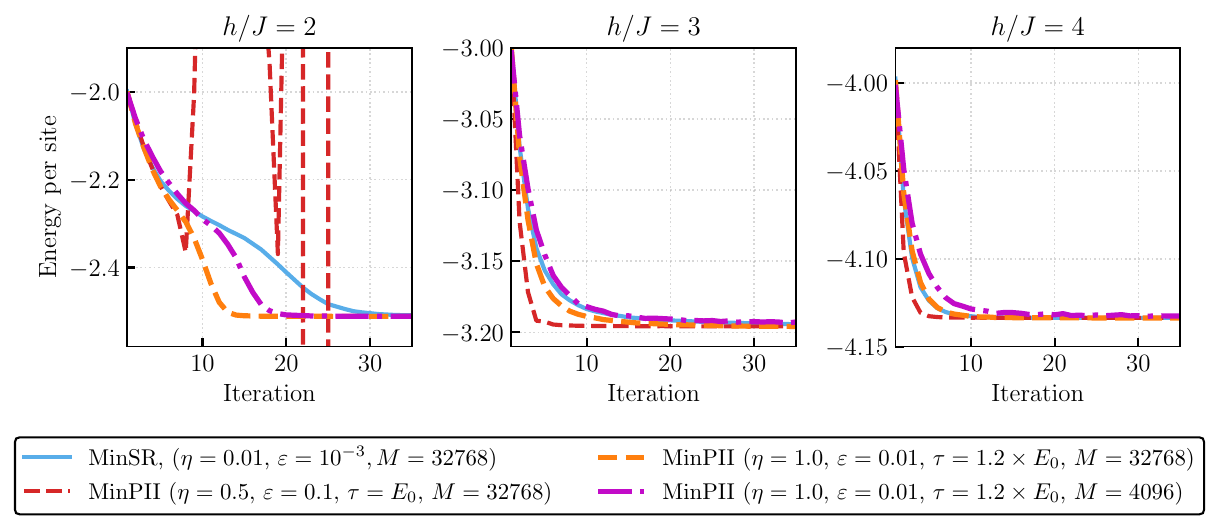}
    \caption{TFIM $10\times 10$: Shown are the results for a restricted Boltzmann Machine with 40400 trainable parameters, sample size 36768.}
    \label{fig:tfim10x10_rbm}
\end{figure}
In \Cref{fig:tfim10x10_rbm} we see that the findings of the small-scale TFIM model translate both to the $10\times10$ TFIM model and the Heisenberg model. For TFIM, the convergence difference of minPII and minSR is most pronounced in the case $ h/J=2$, where minPII is significantly more efficient; minPII works well in the large learning rate regime, and enlarging the shift value $\tau$ makes the method stable. In the case of the ViT, the difference between SR and PII is even more pronounced, much in favor of PII, a finding that we also observe for the Heisenberg model. In our simulations, we deliberately restrict to stoquastic Hamiltonians to ensure real and positive-valued parameterizations. A comprehensive benchmark and extension to Hamiltonians with non-trivial sign structures is left for future work.

\begin{figure}[h]
    \centering
    \includegraphics[width=\linewidth]{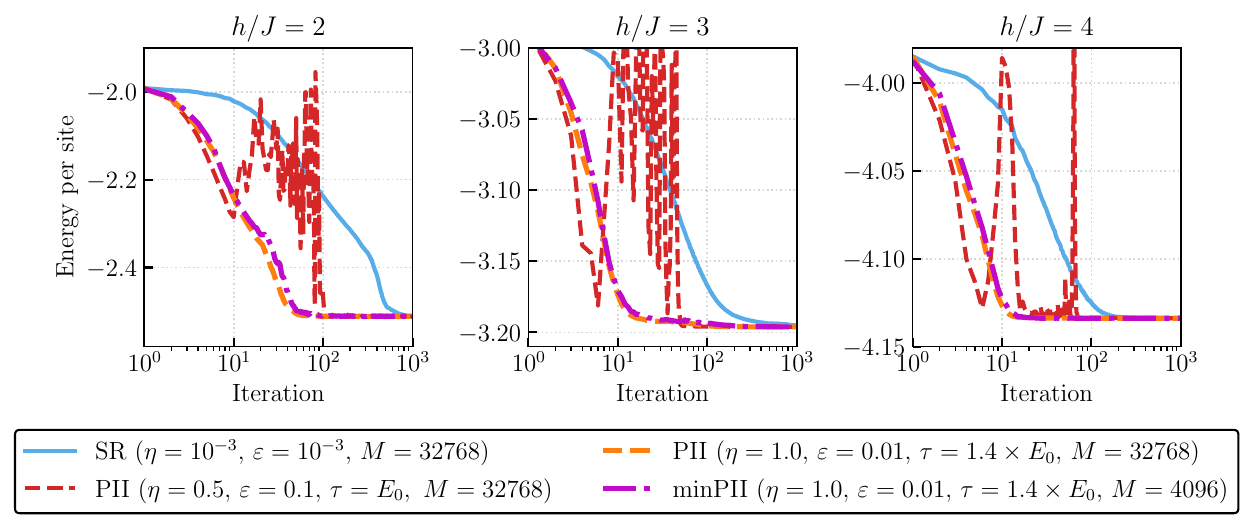}
    \caption{TFIM $10\times 10$: Vision Transformer with 18140 trainable parameters, sample size 36768 and 4096 respectively. }
    \label{fig:tfim10x10_ViT}
\end{figure}

\section{Conclusion and Outlook}
This manuscript presents a geometric framework for translating non-parametric function space algorithms into practical optimization schemes for variational quantum Monte Carlo. Our approach leverages Galerkin projections onto the tangent space of the variational ansatz, ensuring that the essential characteristics of the functional dynamics are preserved in the discretized, parameter-space algorithms. This unified perspective allows us to re-interpret existing methods, such as stochastic reconfiguration (SR), and derive novel, geometrically principled algorithms, notably the projected inverse iteration.

We demonstrate that stochastic reconfiguration can be understood as a Galerkin-projected Riemannian $L^2$ gradient descent on the sphere. Our analysis provides insights into its linear convergence rate, explicitly showing its dependence on the spectral gap of the Hamiltonian. In contrast, the Riemannian Newton method corresponds to the classical Rayleigh quotient iteration on the functional level, possessing a locally cubic convergence rate, at least in the finite-dimensional case. To address its local convergence, we propose a globalized Riemannian Newton method by introducing a diagonal shift. This method, when Galerkin-projected, yields our novel projected inverse iteration scheme, which directly corresponds to shifted inverse iteration in function space. Numerical experiments on various quantum spin systems (Transverse Field Ising Model, Heisenberg model) using neural network wavefunctions (Restricted Boltzmann Machines, Vision Transformers) confirm our theoretical findings. We observe that: 
\begin{itemize}
    \item PII consistently outperforms SR, especially in scenarios with narrow spectral gaps, where SR's convergence significantly deteriorates. This resilience stems from PII's connection to inverse iteration, which can mitigate the impact of small spectral gaps through a judicious choice of shift parameter.

    \item PII works well with large learning rates, aligning with its functional counterpart (inverse iteration) which naturally uses unit step size. This is in contrast with gradient descent methods that typically require smaller step sizes.

    \item The ability of PII to incorporate a priori approximate knowledge of the ground-state energy as a shift parameter provides a principled way to select this hyperparameter. While theoretically optimal shifts can sometimes be sensitive to sampling noise, slightly perturbed shifts empirically stabilize the method without significant loss of performance.
\end{itemize}

This work highlights the power of a geometric, function-space perspective in designing and understanding optimization algorithms for VMC. We believe this framework opens up many exciting avenues for future research. This includes exploring more advanced functional optimization tools like trust-region methods for globalization, gradient acceleration schemes, or leveraging conjugate gradient eigensolvers within the functional framework. The insights gained from the function space can continue to inform the design of more efficient and robust algorithms for challenging quantum many-body problems.

\appendix

\section*{Acknowledgments}
MZ acknowledges support from an ETH Postdoctoral Fellowship for the project “Reliable, Efficient, and Scalable Methods for Scientific Machine Learning”. This work was supported by a grant from the Swiss National Supercomputing Centre (CSCS) under project ID lp20 on Alps.

\bibliographystyle{siamplain}
\bibliography{references}

\end{document}